\documentclass[11pt]{article}

\usepackage[a4paper,margin=1in]{geometry}
\usepackage{amsmath,amssymb,amsthm,mathtools}
\usepackage{bm}
\usepackage{braket}
\usepackage{graphicx}
\usepackage{xcolor}
\usepackage{enumitem}
\usepackage{float}
\usepackage{eso-pic}
\usepackage{aliascnt}
\usepackage{hyperref}
\usepackage[nameinlink,capitalize]{cleveref}

\hypersetup{
  colorlinks=true,
  linkcolor=blue,
  citecolor=blue,
  urlcolor=blue,
  pdftitle={Unweighted Gapped Clique Homology is QMA1(G2)-complete},
  pdfauthor={Ryu Hayakawa}
}

\graphicspath{{figures/}}

\newtheorem{theorem}{Theorem}[section]

\newaliascnt{lemma}{theorem}
\newtheorem{lemma}[lemma]{Lemma}
\aliascntresetthe{lemma}

\newaliascnt{proposition}{theorem}
\newtheorem{proposition}[proposition]{Proposition}
\aliascntresetthe{proposition}

\newaliascnt{corollary}{theorem}
\newtheorem{corollary}[corollary]{Corollary}
\aliascntresetthe{corollary}

\theoremstyle{definition}
\newaliascnt{definition}{theorem}
\newtheorem{definition}[definition]{Definition}
\aliascntresetthe{definition}

\newaliascnt{remark}{theorem}

\aliascntresetthe{remark}

\crefname{theorem}{Theorem}{Theorems}
\Crefname{theorem}{Theorem}{Theorems}
\crefname{lemma}{Lemma}{Lemmas}
\Crefname{lemma}{Lemma}{Lemmas}
\crefname{proposition}{Proposition}{Propositions}
\Crefname{proposition}{Proposition}{Propositions}
\crefname{corollary}{Corollary}{Corollaries}
\Crefname{corollary}{Corollary}{Corollaries}
\crefname{definition}{Definition}{Definitions}
\Crefname{definition}{Definition}{Definitions}
\crefname{remark}{Remark}{Remarks}
\Crefname{remark}{Remark}{Remarks}

\newcommand{\C}{\mathbb C}
\newcommand{\R}{\mathbb R}
\newcommand{\Z}{\mathbb Z}
\newcommand{\poly}{\mathrm{poly}}
\newcommand{\Cl}{\mathrm{Cl}}
\newcommand{\im}{\mathrm{im}}

\newcommand{\level}{\ell}
\newcommand{\Enc}{\operatorname{Enc}}
\newcommand{\hEnc}{\widehat{\mathbf{Enc}}}
\newcommand{\regcycle}[1]{\psi_{#1}}
\newcommand{\hregcycle}[1]{\widehat{\bm{\psi}_{#1}}}
\newcommand{\hregvertex}[1]{\ket{\widehat{\bm{v}}_{#1}}}
\newcommand{\mloc}{m_{\mathrm{loc}}}

\newcommand{\QMAone}{\mathsf{QMA}_1}
\newcommand{\SharpP}{\mathsf{\#P}}
\newcommand{\PSPACE}{\mathsf{PSPACE}}
\newcommand{\Gtwo}{\mathcal{G}_2}

\newcommand{\DeltaAug}{\Delta^{\mathrm{aug}}}
\newcommand{\bd}{\widehat{\bm d}}
\newcommand{\bpartial}{\widehat{\bm{\partial}}}
\newcommand{\bDelta}{\widehat{\bm{\Delta}}}
\newcommand{\bT}{\widehat{\bm T}}
\newcommand{\bfmult}{\widehat{\bm f}}
\newcommand{\wtilde}{\widetilde w}
\newcommand{\BlockSym}{\mathfrak S}
\newcommand{\GQ}{G_Q}
\newcommand{\bGQ}{\widehat{\bm{G}}_Q}
\newcommand{\bR}{\widehat{\bm{R}}}
\newcommand{\GH}{G_H}
\newcommand{\bGH}{\widehat{\bm{G}}_H}
\newcommand{\GammaH}{\Gamma_H}
\newcommand{\bGamma}{\widehat{\bm{\Gamma}}}
\newcommand{\bGammaH}{\widehat{\bm{\Gamma}}_H}
\newcommand{\bGcal}{\widehat{\bm{\mathcal G}}}

\newcommand{\ratio}{\rho}
\newcommand{\pgap}{p}
\newcommand{\pH}{p_H}
\newcommand{\KK}{King--Kohler}

\title{Unweighted Gapped Clique Homology is
\texorpdfstring{$\mathsf{QMA}_1$}{QMA1}-complete}
\author{Ryu Hayakawa\thanks{The Hakubi Center and
Yukawa Institute for Theoretical Physics, Kyoto University.\\
Email: \texttt{ryu.hayakawa@yukawa.kyoto-u.ac.jp}}}
\date{\today}

\begin{document}
\AddToShipoutPictureFG*{%
  \AtPageUpperLeft{%
    \raisebox{-0.65in}{\makebox[\paperwidth][r]{\small YITP-26-99\hspace{1in}}}%
  }%
}
\maketitle

\begin{abstract}
    Deciding whether the clique complex of a given graph has nontrivial homology in a given dimension, under vertex-product weighting and an inverse-polynomial spectral gap promise on the combinatorial Hodge Laplacian, is known to be \(\mathsf{QMA}_1\)-hard and contained in \(\mathsf{QMA}\) by King and Kohler (FOCS 2024). The vertex weights are essential in the known proof, where they provide the scale separation needed for the spectral gap analysis. We prove that, when every vertex has weight one, the problem remains \(\mathsf{QMA}_1^{\mathcal{G}_2}\)-hard and is contained in \(\mathsf{QMA}_1^{\mathcal{G}_2}\), where \(\mathsf{QMA}_1^{\mathcal{G}_2}\) is \(\mathsf{QMA}_1\) with the universal gate set \(\mathcal{G}_2=\{\mathsf{X},\mathsf{CX},\mathsf{CCX},H\otimes H\}\). The construction replaces weight by expansion: each vertex of the weighted complex is blown up into a clique whose size encodes its weight. The block sizes are chosen so that symmetric averages reproduce the weighted Hodge metric. The symmetric sector therefore carries the weighted Laplacian up to a common scalar factor, while a local averaging argument gives a uniform lower bound on the orthogonal complement to the symmetric sector. Hence the weighted gap analysis transfers to an unweighted clique complex without introducing either additional low-energy states or spurious homology. Containment in \(\mathsf{QMA}_1^{\mathcal{G}_2}\) follows from Rudolph's exact linear combination of unitaries simulation of sparse integer clique Laplacians. The result shows that the gap promise, rather than vertex weighting, is the source of the complexity of gapped clique homology.
\end{abstract}


\section{Introduction}
\label{sec:introduction}

Computing homology of clique complexes is a basic primitive in topological data
analysis and a natural target for quantum algorithms. In the unweighted
setting, the \(k\)-th combinatorial Hodge Laplacian
\[
  \Delta_k=\partial_{k+1}\partial_{k+1}^{\dagger}
  +\partial_k^{\dagger}\partial_k
\]
has kernel canonically isomorphic to the \(k\)-homology group \(H_k\). Thus a
spectral promise on \(\Delta_k\) gives a gapped version of the homology
decision problem.

The complexity of homology depends strongly on how the complex is specified.
For an explicitly listed finite complex, homology is efficiently computable by
linear algebra. For succinct descriptions, the situation changes: Adamaszek and
Stacho proved \(\mathsf{NP}\)-hardness for homology of clique and independence
complexes of graphs, and identified a co-chordal regime where nontrivial
homology has a combinatorial witness through cross-cycles, proving
\(\mathsf{NP}\)-completeness for clique complexes of co-chordal
graphs~\cite{adamaszek2016complexity}. On the quantum-proof side, Crichigno and
Kohler showed that clique homology is
\(\QMAone\)-hard~\cite{crichigno2022clique}. King and
Kohler strengthened this to a
gapped clique-homology problem on weighted graphs~\cite{king2024gapped}.
A complementary result shows that homology with orientable filtrations has an
\(\mathsf{MA}\)-complete regime~\cite{hayakawa2025computationalMA}. For
Betti-number computation in the usual topological-data-analysis input model,
Schmidhuber and Lloyd proved \(\SharpP\)-hardness of exact computation and
\(\mathsf{NP}\)-hardness of multiplicative
approximation~\cite{schmidhuber2024complexitytheoretic}. Rudolph
found an exact-gate setting where the gapped problem is
\(\QMAone\)-complete and the gapless decision problem is
\(\PSPACE\)-complete~\cite{rudolph2025universala}. Rayudu obtained a parallel
weighted-complex hardness result for the combinatorial Laplacian of independence
complexes~\cite{rayudu2025fermionic}.

This proof-system landscape is closely related to quantum algorithms for
topological data analysis. The algorithms of Lloyd, Garnerone, and Zanardi for
Betti-number estimation, and subsequent work on persistent Betti numbers, use
spectral access to combinatorial or persistent Hodge Laplacians under sampling
and gap assumptions~\cite{lloyd2015quantum,hayakawa2022quantum}.
Worst-case evidence for quantum advantage has developed in parallel: harmonic
persistence is \(\mathsf{BQP}_1\)-hard and in \(\mathsf{BQP}\), while
normalized persistence variants are \(\mathsf{DQC1}\)-hard and in
\(\mathsf{BQP}\)~\cite{gyurik2024quantum,lowe2026complexity}. These results
make the status of the gap promise for clique Hodge Laplacians a central
structural question rather than a technical side condition.

Vertex weights play an important role in these results. In the \KK{} gap proof,
gadget vertices receive a small weight \(\lambda\ll1\), which supplies the
scale separation that makes the gadget degrees of freedom decouple. In the
independence-complex result \cite{rayudu2025fermionic}, vertex weights encode
the Hamiltonian coefficients. The unweighted case therefore raises a question:
can gapped hardness survive when every vertex has weight \(1\)?

Our answer is yes: the weighted hardness result can be made fully unweighted.
The key is to replace vertex weights by multiplicities. The construction
encodes the intended qubit space redundantly, without assigning different
weights to vertices, and gives an energy penalty to directions outside this
encoding. In this limited sense it is close in spirit to quantum
error-correcting codes such as the toric code, where information is stored in
homological degrees of freedom. Our main result is the following.

\begin{theorem}[Main theorem, informal]
Unweighted Gapped Clique Homology is \(\QMAone^{\Gtwo}\)-complete, for
\(\Gtwo=\{\mathsf{X},\mathsf{CX},\mathsf{CCX},H\otimes H\}\).
\end{theorem}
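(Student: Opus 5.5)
The plan has two halves: hardness by a reduction from the weighted King--Kohler instance, and containment via Rudolph's exact simulation.

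\textbf{Hardness.} I start from the \KK{} weighted instance: a graph \(G\) with vertex weights and an inverse-polynomial gap promise on the weighted Hodge Laplacian \(\Delta^w_k\). I first check that the \KK{} hardness goes through for the gate set \(\Gtwo\), since their circuit-to-clique-complex gadgets are built for these exact gates. Then I round all weights to the form \(w_v=\sqrt{n_v}/\sqrt{N}\) for integers \(n_v\le\poly\). This is possible because \KK{} only need weights \(1\) and a small \(\lambda\), and I take \(\lambda^2\) rational with polynomial denominator.

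Next I blow up each vertex \(v\) into a clique \(B_v\) of size \(n_v\). Any two vertices of \(B_u\) and \(B_v\) are joined exactly when \(uv\) is an edge of \(G\). Call the resulting unweighted graph \(\widehat G\). A \(k\)-simplex of \(\mathrm{Cl}(G)\) lifts to the family of simplices that pick one vertex from each block. I define the symmetrization map \(\Enc\) sending a weighted simplex to the normalized uniform sum over its lifts. The key computation is that \(\Enc\) is an isometry from the weighted Hodge inner product and intertwines the boundary maps up to a common scalar. Combinatorially, each face deletion sums over choices in a block of size \(n_v\), producing a factor \(\sqrt{n_v}\) that matches \(w_v\). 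It follows that \(\widehat\Delta_k\) restricted to the symmetric sector equals \(c\,\Enc\,\Delta^w_k\,\Enc^\dagger\) for a single polynomial scalar \(c\). So the YES-case kernel vectors lift to harmonic cycles, and the NO-case gap lifts with the factor \(c\).

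\textbf{Main obstacle: the complement of the symmetric sector.} I must show that on the orthogonal complement of \(\mathrm{im}\,\Enc\), the Laplacian \(\widehat\Delta_k\) has eigenvalues at least \(1/\poly\). This is also where lifted simplices with two or more vertices in the same block live. My first approach uses homology: blowing up a vertex into a clique gives a homotopy equivalent complex, since a block together with its neighbourhood cones off. Hence \(H_k(\mathrm{Cl}(\widehat G))\cong H_k(\mathrm{Cl}(G))\), so there is no spurious homology and the complement has trivial kernel.

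For a quantitative gap, I would write \(\widehat\Delta_k\) as a sum of local terms, one for each block \(B_v\). Each term contains the Laplacian of the complete simplex on \(B_v\), whose spectrum is explicitly \(n_v\) on the nonconstant modes. A local averaging argument then gives: for a chain \(x\perp\mathrm{im}\,\Enc\), some block sees a non-symmetric component, and that component carries energy at least \(\Omega(1)\cdot\|x\|^2/\poly\). This requires controlling cross terms between blocks, and I would handle them by decomposing under the product of symmetric groups \(\prod_v S_{n_v}\). Because \(\widehat\Delta_k\) commutes with this group action, the trivial isotypic component is exactly the symmetric sector. Each nontrivial isotypic component is acted on by a Laplacian that dominates a nonsymmetric block Laplacian. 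Comparing \(c\cdot\mathrm{gap}(\Delta^w)\) with this uniform bound then yields an inverse-polynomial gap overall.

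\textbf{Containment.} \(\widehat\Delta_k\) is a sparse integer matrix with polynomially bounded norm, given that the clique size \(k\) is constant as in \KK{}. I invoke Rudolph's exact LCU simulation with gates from \(\Gtwo\) to implement exactly a projector onto a low-energy test. This gives perfect completeness in the YES case and an inverse-polynomial soundness gap in the NO case, placing the problem in \(\QMAone^{\Gtwo}\).
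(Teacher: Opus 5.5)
Your architecture matches the paper's. You blow up each vertex into a clique whose size encodes its weight, identify the block-symmetric sector with the weighted complex up to one scalar (your $c$ is the paper's $M$), bound the complement separately, and take the minimum. Your YES-case argument (lifting a weighted harmonic vector into $\bT_k$) is valid and a bit more direct than the paper's. So is your true-twin homotopy argument, combined with a dimension count, for the absence of spurious homology.

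The genuine gap is the quantitative bound on $\bT_k^\perp$. This is the paper's main new estimate (\Cref{thm:asymmetric-decoupling}), and you only assert it. The Hodge Laplacian is not a sum of positive terms indexed by blocks. Passing to isotypic components only removes cross terms \emph{between} components; inside one component a chain can still spread over the whole complex. So ``dominates a nonsymmetric block Laplacian'' is exactly what needs proof.

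The missing idea is localization:
\begin{enumerate}[label=(\roman*)]
  \item If $\widehat{\bm A}_v\psi=0$, then $\psi$ is orthogonal to every simplex disjoint from $\widehat{\bm B}_v$, since such simplices are fixed by $\widehat{\bm A}_v$. Hence $\psi$ is supported on simplices meeting the block.
  \item All faces and cofaces of such simplices lie in the closed star $\Cl(K_{\widehat{\bm B}_v})*\widehat{\bm L}_v$. So $\langle\psi,\bDelta_k\psi\rangle$ equals the Hodge energy of $\psi$ computed inside the star.
  \item By \Cref{lem:horak-jost}, the star's augmented Laplacian is $\DeltaAug(\Cl(K_{\bfmult_v}))\otimes I+I\otimes\DeltaAug(\widehat{\bm L}_v)$. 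By \Cref{lem:complete-graph-laplacian}, the first term is exactly $\bfmult_v I$ in every degree, and the second is positive semidefinite. Hence the energy is at least $\bfmult_v\ge1$; in the target degree $2n-1>0$ the augmented and ordinary Laplacians agree.
\end{enumerate}
The needed input is therefore the full augmented statement in all degrees, not only the degree-$0$ fact about nonconstant modes. An asymmetric chain is penalized because its vanishing block average confines it to one block's star.

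Your isotypic decomposition then finishes the job. Every nontrivial irrep of $\prod_v S_{\bfmult_v}$ is nontrivial on some factor $v$, and its isotypic component lies in $\ker\widehat{\bm A}_v$. (The paper uses the telescoping projectors $\widehat{\bm Q}_j$ for the same purpose.) This gives $\bDelta_k|_{\bT_k^\perp}\succeq\min_v\bfmult_v$, much better than the $1/\poly$ you were aiming for.

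Your containment argument rests on a false premise. The degree in \KK{} is not constant; it is $k=2n-1$. If $k$ were constant, the chain spaces would have polynomial dimension and the problem would be in $\mathsf{P}$, contradicting the hardness you are proving. Containment must also hold for arbitrary inputs $(G,k)$, not only for blow-up outputs. The argument that works for every $k$ is the following. Encode a $k$-simplex as a Hamming-weight-$(k+1)$ string on $N$ qubits and put energy $1$ on invalid strings. Each row then has $O(N^2)$ integer entries computable in polynomial time. Dividing by a power of two puts the operator in the form required by Rudolph's exact sparse-Hamiltonian containment for $\mathbb{Q}(i)$ entries.

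Finally, the $\QMAone^{\Gtwo}$-hardness of your source instance is not inherited from \KK{}'s gadgets being built for $\Gtwo$. It comes from Rudolph's $\Gtwo$ 4-QSAT with integer projectors from a fixed finite family, together with his verification that the \KK{} gadgets fill those states.
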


\paragraph{Proof overview.}
Let \(\GammaH\) be the weighted gadget complex associated with the source
Hamiltonian originally introduced in \cite{king2024gapped}, and let \(\bGammaH\) be its unweighted blow-up, see \Cref{fig:blowup-correspondence} for an illustrative example. Permuting the
copies inside any vertex block preserves both \(\bGammaH\) and its Hodge
Laplacian. In the target degree, this gives the orthogonal and Laplacian-invariant
decomposition
\[
  C_{2n-1}(\bGammaH;\R)
  =
  \underbrace{\bT_{2n-1}}_{\text{symmetric}}
  \oplus
  \underbrace{\bT_{2n-1}^{\perp}}_{\text{asymmetric}}.
\]
The symmetric sector is spanned by uniform one-copy-per-block averages; the
asymmetric sector consists of chains that retain some information about the
copy labels.
Concretely, the symmetric basis vectors are given by
\[ \big\{ \ket{\widehat{\bm{\sigma}}}
  =
  \ket{\widehat{\bm v_0}}\otimes\cdots \otimes\ket{\widehat{\bm v_{2n-1}}}\big\}_{\sigma=(v_0,\ldots,v_{2n-1}) \in \Gamma_{H,2n-1}},\]
where \(\ket{\widehat{\bm v_i}}\) is the uniform superposition of the blow-up vertices for \(v_i\) with some fixed multiplicities.

On the symmetric sector, \Cref{lem:symmetric-reduction} identifies the
unweighted blow-up Laplacian with the weighted source Laplacian, up to one
global multiplicative factor \(M\), the top block multiplicity:
\[
  \Phi\,
  \bDelta_{2n-1}\big|_{\bT_{2n-1}}
  \,\Phi^{-1}
  =M\Delta_{\GammaH,2n-1}.
\]
Consequently the single- and many-gadget energy analysis transfers to the
unweighted complex with only this common rescaling. In the YES case, the
weight-free filling argument preserves precisely the encoded states in
\(\ker H\); in the NO case, the weighted many-gadget bound supplies a positive
lower bound \(E_{\mathrm{wt}}\) on \(\Delta_{\GammaH,2n-1}\).

The remaining task is to rule out low-energy states that remember the copy
labels. This is the content of \Cref{thm:asymmetric-decoupling}, which gives
the direct estimate
\[
  \bDelta_{2n-1}\big|_{\bT_{2n-1}^{\perp}}
  \succeq
  \bigl(\min_v \bfmult_v\bigr)I.
\]
Consequently the asymmetric sector contains neither spurious homology nor
additional low-energy states. In the NO case the two invariant sectors combine
to give
\[
  \bDelta_{2n-1}
  \succeq
  \min\!\left\{M E_{\mathrm{wt}},\min_v \bfmult_v\right\}I,
\]
where \(\bfmult_v\) is the multiplicity for \(v\) in the blow-up construction.
Choosing integer-valued block multiplicities makes the first term inverse polynomial and
the second uniformly positive. Topologically, the expanded gadgets still kill
exactly the cycles corresponding to violated local states. Spectrally, the
decomposition above prevents the blow-up from introducing additional low-energy
directions. These two facts preserve perfect completeness while transferring
the weighted gap to the unweighted clique complex.

\medskip

This also suggests an outlook for worst-case quantum advantages in topological
data analysis. As discussed in \Cref{sec:discussion}, similar unweighting
questions can be asked for other weighted or filtered homological complexity
results. The present theorem indicates that, at least in this clique-homology
setting, artificial vertex weights are not the source of the hardness.

\paragraph{Remark on the hat and bold notations.}
Throughout the paper, unadorned symbols such as \(\GammaH\) denote weighted base objects while hat-bold
symbols such as \(\bGH\) and \(\bGammaH\) denote their expanded unweighted
counterparts.

\paragraph{Organization.}
\Cref{sec:preliminaries} fixes notation. \Cref{sec:main-results} states the
promise problem and the main theorem. \Cref{sec:blowup-reduction} contains the
blow-up and symmetric reduction. \Cref{sec:expanded-gadget-complex} constructs
the expanded gadget complex. \Cref{sec:energy-analysis} proves the energy
analysis on the symmetric sector and its orthogonal complement.
\Cref{sec:containment} proves
membership in \(\QMAone^{\Gtwo}\). \Cref{sec:discussion} records implications
and open directions. \Cref{app:many-gadget-proof-sketch} gives a proof sketch
for the imported many-gadget estimate.

\section{Preliminaries}
\label{sec:preliminaries}

\subsection{Clique complexes and Hodge Laplacians}

An abstract simplicial complex \(X\) on a finite vertex set \(V\) is a
collection of finite subsets of \(V\) such that if \(\sigma\in X\) and
\(\tau\subseteq\sigma\), then \(\tau\in X\). Its elements are simplices, and
\(\dim\sigma=|\sigma|-1\). This is the combinatorial dimension of a simplex.
When \(\dim\) is applied to a chain, homology, or Hilbert space, it denotes the
ordinary vector-space dimension. For a graph \(G=(V,E)\), its clique complex
\(\Cl(G)\) is the abstract simplicial complex whose simplices are the cliques
of \(G\): a subset \(\sigma\subseteq V\) belongs to \(\Cl(G)\) iff every pair
of distinct vertices in \(\sigma\) is joined by an edge of \(G\). Thus
vertices and edges of \(G\) give the \(0\)- and \(1\)-simplices, while higher
simplices are forced by complete subgraphs.

The real \(k\)-chain space \(C_k(X)\) has a basis indexed by the oriented
\(k\)-simplices. We use the same simplex basis for the corresponding Hilbert
space on which the Hodge Laplacian acts.
The ordinary algebraic boundary
\(\partial_k:C_k(X)\to C_{k-1}(X)\) is defined on an oriented simplex by
summing its oriented codimension-one faces. It satisfies
\(\partial_k\partial_{k+1}=0\). The \(k\)-cycles and \(k\)-boundaries are
\[
  Z_k(X)=\ker \partial_k,
  \qquad
  B_k(X)=\im \partial_{k+1},
\]
and the \(k\)-th homology group is
\[
  H_k(X;\R)=Z_k(X)/B_k(X).
\]
Thus a nonzero homology class is represented by a cycle that is not itself a
boundary. The \(k\)-th Betti number is
\[
  \beta_k(X)=\dim H_k(X;\R).
\]
We also use reduced homology when discussing joins. Its Betti numbers are the
same as the ordinary Betti numbers in positive degrees. In degree \(0\), if
\(X\) is nonempty, reduced homology subtracts the one class represented by a
constant component:
\[
  \widetilde\beta_0(X)=\beta_0(X)-1.
\]

For weighted Hodge operators we use the convention of
\KK{}~\cite[Sec.~7.7]{king2024gapped}. This convention is coboundary-first.
Assign each vertex \(v\) a positive amplitude weight \(w(v)\), put
\[
  w(\sigma)=\prod_{v\in\sigma}w(v),
  \qquad
  \langle \sigma,\tau\rangle_w=w(\sigma)^2\delta_{\sigma,\tau}.
\]
For oriented simplices \(\sigma\subset\tau\), let
\([\tau:\sigma]\in\{+1,-1\}\) be the incidence sign: it is \(+1\) when the
orientation on \(\sigma\) induced from \(\tau\) agrees with the chosen
orientation of \(\sigma\), and \(-1\) otherwise. We set
\([\tau:\sigma]=0\) when \(\sigma\) is not a face of \(\tau\).
Let \(d_k:C_k(X)\to C_{k+1}(X)\) be the ordinary algebraic coboundary,
\[
  d_k\ket{\sigma}
  =
  \sum_{\tau\supset\sigma,\,\dim\tau=k+1}
  [\tau:\sigma]\ket{\tau}.
\]
The weighted boundary is its adjoint
\(\partial_{w,k+1}=(d_k)^\dagger\), and
\[
  \Delta_{w,k}(X)
  =
  \partial_{w,k+1} d_k+d_{k-1}\partial_{w,k}.
\]
Equivalently, in the orthonormal simplex basis
\(\ket{\sigma}'=w(\sigma)^{-1}\ket{\sigma}\), an incidence
\(\tau=\sigma\cup\{v\}\) has matrix coefficient
\[
  \bra{\tau}'d_k\ket{\sigma}'
  =
  [\tau:\sigma]\frac{w(\tau)}{w(\sigma)}
  =
  [\tau:\sigma]w(v).
\]
This convention is important below: the algebraic coboundary is independent of
the weights, whereas its adjoint and the resulting Laplacian depend on them.
In the unweighted case this is the usual chain Hodge Laplacian under the
simplex-basis identification. We therefore continue to write \(\Delta_k\) when
emphasizing the simplex degree. By the Hodge decomposition (see, e.g.,
\cite{goldberg2002combinatorial}),
\(\ker \Delta_k(X) = Z_k(X) \cap
  B_k(X)^\perp\cong H^k(X;\R)\cong H_k(X;\R)\).
Thus the promise problem can be phrased spectrally: the YES case asks for a
zero eigenvalue of \(\Delta_k\), while the NO case promises that this zero
eigenspace is absent and the bottom of the spectrum is bounded below by an
inverse polynomial.

We also use the augmented Hodge Laplacian, mainly when applying join formulas.
Add a degree-\(-1\) basis vector \(\ket{\varnothing}\) for the empty simplex,
write \(C_{-1}(X)=\R\ket{\varnothing}\), and set
\[
  d_{-1}\ket{\varnothing}
  =
  \sum_{v\in V(X)}\ket v.
\]
With \(\partial_0=(d_{-1})^\dagger\), the augmented Laplacian is defined by the
same formula, where \(\partial_{\ell+1}=(d_\ell)^\dagger\) in the unweighted
simplex inner product:
\[
  \DeltaAug_k(X)
  =
  \partial_{k+1}d_k+d_{k-1}\partial_k.
\]
Thus \(\DeltaAug_k=\Delta_k\) for \(k>0\), while
\[
  \DeltaAug_0
  =
  \Delta_0+d_{-1}(d_{-1})^\dagger.
\]
This is the Hodge-Laplacian convention corresponding to reduced homology.

We use real coefficients throughout. The reduction only changes the inner
product on the same underlying real simplex space when it invokes weighted
complexes; the homology groups themselves are unchanged by positive weights.
Weights affect the Hodge Laplacian and its gap, but not the algebraic
coboundary or the underlying (co)homology.

The encoding maps and quantum verifier sometimes use complex amplitudes, so we
also allow complex coefficients in the same simplex basis and write the
resulting space as \(C_k(X;\C)\). Since every Laplacian in the construction is
represented by a real symmetric matrix, allowing complex coefficients leaves
its eigenvalues unchanged and only extends its real kernel by complex linear
combinations. We therefore state the topological conclusions over \(\R\) and
allow complex amplitudes when discussing quantum states.

\subsection{Join calculus}

For simplicial complexes \(X\) and \(Y\) on disjoint vertex sets, their join
\(X*Y\) is the simplicial complex whose simplices are unions
\(\sigma\cup\tau\), with \(\sigma\in X\) and \(\tau\in Y\). For graphs \(G\)
and \(H\), we write \(G*H\) for the graph join: take the disjoint union of
\(G\) and \(H\), and add every edge between a vertex of \(G\) and a vertex of
\(H\). Then
\[
  \Cl(G*H)=\Cl(G)*\Cl(H).
\]
Indeed, a clique in the graph join is exactly the union of a clique in \(G\)
and a clique in \(H\). This compatibility is used throughout the \KK{}
register construction \cite{king2024gapped}.

The join formula is most naturally stated for the augmented Laplacian introduced
above.

\begin{lemma}[Join formula {\cite{horak2011spectra,king2024gapped}}]
\label{lem:horak-jost}
For simplicial complexes \(X,Y\), the chain space of \(X*Y\) decomposes as
\[
  C_k(X*Y)\cong\bigoplus_{i+j=k-1}C_i(X)\otimes C_j(Y),
\]
and on the summand \(C_i(X)\otimes C_j(Y)\),
\[
  \Delta_k^{\mathrm{aug},X*Y}
  =
  \Delta_i^{\mathrm{aug},X}\otimes I
  +I\otimes \Delta_j^{\mathrm{aug},Y}.
\]
The summands with \(i=-1\) or \(j=-1\) represent simplices lying entirely in
the other join factor.
\end{lemma}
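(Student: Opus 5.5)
The plan is to prove the join formula directly in the unweighted augmented setting. The first step is to build the chain-level isomorphism, and the second is to check that the augmented coboundary of \(X*Y\) becomes a graded tensor-product coboundary. The Laplacian identity then follows from the standard anticommutation cancellation.

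\textbf{Step 1: chain isomorphism.} Every simplex of \(X*Y\) is uniquely a disjoint union \(\sigma\cup\tau\) with \(\sigma\in X\cup\{\varnothing\}\) and \(\tau\in Y\cup\{\varnothing\}\), not both empty. Its dimension is \(\dim\sigma+\dim\tau+1\). So the map
\[
  \ket{\sigma}\otimes\ket{\tau}\mapsto\ket{\sigma\cup\tau},
\]
with the orientation given by listing the vertices of \(\sigma\) first and then those of \(\tau\), is a bijection of bases from \(\bigoplus_{i+j=k-1}C_i(X)\otimes C_j(Y)\) onto \(C_k(X*Y)\). The terms \(\varnothing\otimes\tau\) and \(\sigma\otimes\varnothing\) give the summands with \(i=-1\) or \(j=-1\). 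For \(k=-1\) we also identify \(\ket{\varnothing}\otimes\ket{\varnothing}\) with the empty simplex of \(X*Y\). The basis is orthonormal on both sides, so the map is unitary.

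\textbf{Step 2: coboundary.} Cofaces of \(\sigma\cup\tau\) come from adding one vertex either to \(\sigma\) or to \(\tau\). Adding \(v\) to \(\sigma\) produces exactly the incidence sign of \(d^X\). Adding \(v\) to \(\tau\) requires moving \(v\) past the \(|\sigma|=i+1\) vertices of \(\sigma\), which gives an extra factor \((-1)^{i+1}\). This yields
\[
  d^{X*Y}=d^X\otimes I+(-1)^{i+1}\,I\otimes d^Y
\]
on \(C_i(X)\otimes C_j(Y)\). The augmentation is what makes this uniform. The case \(\sigma=\varnothing\) is handled by \(d_{-1}\ket{\varnothing}=\sum_v\ket v\), which creates the vertices of \(X\), and the same holds on the \(Y\) side. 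So no boundary cases need separate treatment.

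\textbf{Step 3: Laplacian.} Write \(S=(-1)^{\deg+1}\) for the grading operator on \(C_\bullet(X)\), so that \(d^{X*Y}=d^X\otimes I+S\otimes d^Y\). Then \(d^X S=-S d^X\) and \(S^\dagger=S\), and the adjoint is
\[
  \partial^{X*Y}=\partial^X\otimes I+S\otimes\partial^Y .
\]
We expand \(\partial^{X*Y}d^{X*Y}+d^{X*Y}\partial^{X*Y}\). The diagonal terms give
\[
  \Delta^{\mathrm{aug},X}\otimes I+I\otimes\Delta^{\mathrm{aug},Y},
\]
using \(S^2=I\). The cross terms are
\[
  \partial^XS\otimes d^Y+S\partial^X\otimes\partial^Y\cdots .
\]
Collecting them, they take the form \((\partial^XS+S\partial^X)\otimes(\cdot)\) and \((d^XS+Sd^X)\otimes(\cdot)\), and each of these vanishes by anticommutation.

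\textbf{Main obstacle.} The main obstacle is bookkeeping rather than ideas. One must verify the sign \((-1)^{i+1}\) carefully, including the case \(i=-1\). One must also check that the degree-\(-1\) augmentation terms reproduce exactly \(\DeltaAug_0=\Delta_0+d_{-1}d_{-1}^\dagger\) on the summands \(C_{-1}\otimes C_j\). I would confirm these signs on a small example, such as the join of two points, before finalizing.
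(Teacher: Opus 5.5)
Your proof is correct. The paper gives no proof of its own for this lemma; it imports it from Horak--Jost and King--Kohler. Your argument is the standard one behind those citations. You identify the augmented chain complex of \(X*Y\), shifted by one, with the tensor product of the augmented complexes of \(X\) and \(Y\) carrying the sign \(S=(-1)^{\deg+1}\). The cross terms of \(\partial d+d\partial\) then cancel because \(S\) anticommutes with \(d^X\) and \(\partial^X\). Your sign \((-1)^{i+1}\) in Step 2 checks out, including the case \(i=-1\).

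\textbf{Corrections to the write-up.}
\begin{itemize}
  \item In Step 3 the four cross terms are \(\partial^XS\otimes d^Y\), \(S\partial^X\otimes d^Y\), \(Sd^X\otimes\partial^Y\) and \(d^XS\otimes\partial^Y\). The displayed \(S\partial^X\otimes\partial^Y\) is a typo. Your grouping into \((\partial^XS+S\partial^X)\otimes d^Y\) and \((d^XS+Sd^X)\otimes\partial^Y\) is nonetheless right.
  \item The low-degree check you flag as the main obstacle needs no separate treatment. Step 2 already holds uniformly on the augmented complexes, including the summands with \(i=-1\) or \(j=-1\), so Step 3 covers every degree at once.
  \item As a sanity check at \(k=0\): the all-ones term \(d_{-1}d_{-1}^\dagger\) of the join exactly cancels the \(-1\) graph-Laplacian entries between \(V(X)\) and \(V(Y)\). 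This leaves \(\DeltaAug_0^X+|V(Y)|\) and \(|V(X)|+\DeltaAug_0^Y\) on the two summands, as the formula predicts, since \(\DeltaAug_{-1}=\partial_0d_{-1}\) is multiplication by the vertex count.
  \item The paper also applies this lemma to vertex-weighted complexes, for example in the padding step of the single-gadget analysis. Your argument extends verbatim there. The identity \(w(\sigma\cup\tau)=w(\sigma)w(\tau)\) makes the weighted orthonormal basis factor as \(\ket{\sigma}'\otimes\ket{\tau}'\), and the coboundary coefficient \([\cdot]\,w(v)\) for an added vertex factors the same way.
\end{itemize}
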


Two consequences are used repeatedly. First, harmonic chains in a join are
tensor products of augmented harmonic chains in the factors, in the
appropriate total degree. This is the reduced-homology register mechanism
behind the \KK{} qubit graph, called the qubit-register graph below. Second,
the positive spectrum on each join summand is obtained from sums of factor
eigenvalues, which lets complete-graph blocks act as local energetic penalties.
Here \(K_f\) denotes the complete graph on \(f\) vertices.

\begin{lemma}[Spectrum of the complete-graph clique Laplacian
\cite{gundert2015eigenvalues}]
\label{lem:complete-graph-laplacian}
The complete-complex spectrum gives the following full Hodge-Laplacian form for
the clique complex of a complete graph.
For \(1\le i\le f-1\),
\[
  \Delta_i(\Cl(K_f))=fI.
\]
In degree \(0\), the spectrum is \(0\) on the constant vector and \(f\) on
its orthogonal complement. Hence every nonconstant complete-graph component
has Hodge energy at least \(f\).
Under the augmented convention, the augmentation lifts the constant vector,
and
\[
  \DeltaAug_i(\Cl(K_f))=fI
  \qquad (-1\le i\le f-1).
\]
\end{lemma}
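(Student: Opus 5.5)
The plan is to compute the augmented Laplacian directly in the simplex basis of the full simplex on \(f\) vertices. Once \(\DeltaAug_i(\Cl(K_f))=fI\) is established for \(-1\le i\le f-1\), the unaugmented statements follow: for \(i\ge1\) the two Laplacians coincide, and in degree \(0\) we have \(\Delta_0=\DeltaAug_0-d_{-1}(d_{-1})^\dagger\). Writing \(\mathbf 1=d_{-1}\ket{\varnothing}=\sum_v\ket v\), the correction \(d_{-1}(d_{-1})^\dagger=\ket{\mathbf 1}\bra{\mathbf 1}\) has eigenvalue \(f\) on the constant vector and \(0\) on its complement. Hence \(\Delta_0\) is \(0\) on constants and \(f\) on the orthogonal complement, which gives the ``nonconstant component has energy at least \(f\)'' remark.

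For the augmented identity, fix oriented \(i\)-simplices \(\sigma,\sigma'\) of the full simplex \(\{1,\dots,f\}\) and compute \(\bra{\sigma'}(\partial_{i+1}d_i+d_{i-1}\partial_i)\ket{\sigma}\).

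\textbf{Diagonal terms.} The term \(\partial d\) counts the cofaces \(\sigma\cup\{v\}\) with \(v\notin\sigma\), giving \(f-(i+1)\). The term \(d\partial\) counts the faces of \(\sigma\), giving \(i+1\); this includes the empty face when \(i=0\), which is exactly where the augmentation is needed. The diagonal entry is therefore \(f\).

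\textbf{Off-diagonal terms.} A nonzero entry requires \(|\sigma\cap\sigma'|=i\), so that \(\sigma\) and \(\sigma'\) share a common face \(\rho=\sigma\cap\sigma'\) and a common coface \(\tau=\sigma\cup\sigma'\). The two contributions are \([\tau:\sigma][\tau:\sigma']\) and \([\sigma:\rho][\sigma':\rho]\). The standard sign identity \([\tau:\sigma][\tau:\sigma']=-[\sigma:\rho][\sigma':\rho]\) is the same computation as \(\partial\partial=0\) applied to \(\tau\) and restricted to \(\rho\). So the off-diagonal entries vanish.

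\textbf{Edge cases.} Every \((i+1)\)-set is a simplex of \(\Cl(K_f)\), so the coface count is complete. For \(i=f-1\) there are no cofaces, and the diagonal entry is \(0+f=f\). For \(i=-1\) the only basis vector is \(\ket{\varnothing}\), and \(\partial_0d_{-1}\ket{\varnothing}=f\ket{\varnothing}\).

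The main obstacle is the sign identity in the off-diagonal step, together with keeping the augmented boundary conventions consistent at \(i=0\). I would handle the sign by orienting all simplices by increasing vertex order and writing \([\tau:\sigma]=(-1)^{\text{position of the removed vertex}}\). With \(\tau=\rho\cup\{a,b\}\), removing \(b\) then \(a\) versus \(a\) then \(b\) yields opposite total parity, which gives the required minus sign.
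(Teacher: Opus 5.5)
Your proof is correct and complete. The route differs from the paper's mainly because the paper gives no argument of its own.

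\textbf{The paper's route.} It points to the complete-complex spectrum of Gundert--Wagner, i.e.\ the separate up- and down-Laplacian spectra recorded right after the lemma: eigenvalues \(0\) and \(f\) with multiplicities \(\binom{f-1}{i}\) and \(\binom{f-1}{i+1}\), complementary for the down part, read in the augmented convention at \(i=0\). From that input one gets \(fI\) structurally. The two terms \(\partial_{i+1}d_i\) and \(d_{i-1}\partial_i\) have orthogonal ranges, since \(\im\partial_{i+1}\perp\im d_{i-1}\). Each acts as \(f\) times the projector onto its range, and the two ranges exhaust \(C_i\) because \(\binom{f-1}{i}+\binom{f-1}{i+1}=\binom{f}{i+1}\).

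\textbf{Your route.} You compute the sum entrywise, which is the standard entrywise formula for the combinatorial Laplacian specialized to the full simplex. The diagonal is \((f-i-1)+(i+1)\), and the off-diagonal entries cancel by the \(\partial\partial=0\) sign identity. Completeness of \(K_f\) is used exactly to guarantee that every lower-adjacent pair \(\sigma,\sigma'\) is also upper-adjacent, i.e.\ that \(\sigma\cup\sigma'\) is a simplex. In a general complex such pairs would leave \(\pm1\) entries.

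\textbf{Comparison.} Your argument is self-contained. It treats degrees \(-1\) and \(0\) uniformly once \([v:\varnothing]=+1\) is read as the position-\(0\) case of your sign rule. It also recovers the unaugmented degree-\(0\) statement by subtracting \(\ket{\mathbf 1}\bra{\mathbf 1}\), consistent with \(\DeltaAug_0=\Delta_0+d_{-1}(d_{-1})^\dagger\). The cited route additionally yields the separate up/down multiplicities. However, the lemma and its later use in \Cref{thm:asymmetric-decoupling} need only the sum.
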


For reference, the up-Laplacian on \(C_i(\Cl(K_f))\) has eigenvalue \(0\) with
multiplicity \(\binom{f-1}{i}\) and eigenvalue \(f\) with multiplicity
\(\binom{f-1}{i+1}\). The down-Laplacian has the complementary spectrum:
eigenvalue \(f\) with multiplicity \(\binom{f-1}{i}\) and eigenvalue \(0\)
with multiplicity \(\binom{f-1}{i+1}\). See
\cite[Lem.~8]{gundert2015eigenvalues} for the complete-complex spectrum.

\subsection{\KK{} register and gadgets}

We recall only the structural part of the \KK{} construction
\cite{king2024gapped}; the gap estimates are imported later in
\Cref{sec:energy-analysis}. At the topological level, one logical qubit is
encoded by a bowtie graph \(\GQ=B_0\cup B_1\), where \(B_0\) and \(B_1\) are
the two squares representing the computational basis states \(\ket{0}\)
and \(\ket{1}\). The two cycles share one vertex. With the shared vertex
labelled \(v_4\), \(B_0\) is the join of two disconnected \(0\)-complexes on
\(\{v_1,v_3\}\) and \(\{v_2,v_4\}\), while \(B_1\) is the join of
\(\{v_4,v_6\}\) and \(\{v_5,v_7\}\).

For \(n\) qubits the \KK{} qubit graph, which we call the \emph{qubit-register graph},
is the join
\[
  \mathcal G_n=\GQ^{*n}.
\]
Its relevant Hodge degree is \(2n-1\). Choosing oriented one-cycle
representatives \(\regcycle{0}\) and \(\regcycle{1}\), supported on the squares
\(B_0\) and \(B_1\), respectively, the computational basis chain is
\[
  \Enc\ket{x_1\cdots x_n}
  =
  \regcycle{x_1}*\cdots *\regcycle{x_n}
  \in C_{2n-1}(\Cl(\mathcal G_n)).
\]
Thus the harmonic register is identified with the \(n\)-qubit Hilbert space.

Each local projector \(\phi_i\) specifies an integer linear combination of
these encoded basis cycles. The \KK{} filling gadget \(\mathcal T_i\) is
attached to the register clique complex so that this chosen cycle becomes
null-homologous, while the complementary encoded cycles are preserved by the
selectivity argument.

Flagness is part of the \KK{} construction. Thickening is realized by an
explicit graph~\cite[Lem.~7.3]{king2024gapped}, and the later coning and vertex
identifications preserve flagness~\cite[Lems.~8.3--8.4]{king2024gapped}.
Moreover, distinct gadget interiors have no edges between them, so no clique
spans two gadgets.

Let \(\GH\) be the resulting base graph. We therefore define the weighted
\KK{} base complex by
\[
  \GammaH
  =\Cl(\GH)
  =\Cl(\mathcal G_n)\cup_i \mathcal T_i
\,.
\]
The subscript \(H\) records the
input Hamiltonian instance whose local projector terms determine the attached
filling gadgets. The present paper later replaces this weighted base complex by
an unweighted blow-up, but the underlying register-and-filling picture is the
one just described.

\subsection{Complexity conventions}
\label{sec:complexity-conventions}

We work with \(\QMAone^{\Gtwo}\), where
\(\Gtwo=\{\mathsf{X},\mathsf{CX},\mathsf{CCX},H\otimes H\}\), following
Rudolph's exact-gate formulation; here \(\mathsf{CCX}\) is the Toffoli gate and
\(H\otimes H\) is treated as a two-qubit gate. The
hardness reduction starts from Rudolph's \(\Gtwo\) 4-QSAT construction
\cite[Thm.~4.2]{rudolph2025universala}, in the concrete form embedded into
clique homology in \cite[Thm.~6.8 and App.~D.1]{rudolph2025universala}. The
containment proof uses exact LCU simulation for sparse cyclotomic-integer
Hamiltonians.

The resulting source Hamiltonian is a frustration-free sum of rank-one
projectors onto integer states from a fixed finite local family. Here an
integer state means a normalized local state proportional to
\[
  \sum_{x\in\{0,1\}^m} z_x\ket{x},
  \qquad z_x\in\Z,
\]
for the computational basis on the \(m\)-qubit support of the projector. We
write \(\mloc\) for the maximum number of logical qubits touched by any source
projector; Rudolph's construction has \(\mloc\le4\). After weakening the stated
inverse-polynomial NO threshold if necessary, we write the source promise as
\[
  \pH=\frac{1}{q(n)},
\]
where \(q(n)\) is a positive integer-valued polynomial. This is the form
compatible with the \KK{} clique-homology gadgets: the gadget topology encodes
integer linear combinations of computational-basis cycles, whereas arbitrary
complex amplitudes are not directly available in a clique complex.

\section{Main Results}
\label{sec:main-results}

We first formulate the unweighted promise problem for clique homology.

\begin{definition}[Unweighted Gapped Clique Homology]
\label{def:unweighted-gapped-clique-homology}
Fix a positive integer \(c_{\mathrm{gap}}\) and put
\(\pgap(N)=N^{-c_{\mathrm{gap}}}\). An instance consists of an unweighted graph
\(G\) on \(N\) vertices and an integer \(0\le k<N\). Decide whether
\[
  \textsc{Yes}: H_k(\Cl(G);\R)\ne 0
\]
or
\[
  \textsc{No}: H_k(\Cl(G);\R)=0
  \quad\text{and}\quad
  \lambda_{\min}(\Delta_k(\Cl(G)))\ge \pgap(N)
\]
under the promise that either holds.
\end{definition}

Here \(\pgap(N)\) is the fixed promise gap of the target clique-homology
problem. We reserve \(\pH\) for the inverse-polynomial gap of the source
Hamiltonian instance.

\begin{theorem}[Main theorem]
\label{thm:main}
For a sufficiently large fixed constant \(c_{\mathrm{gap}}\), Unweighted
Gapped Clique Homology is \(\QMAone^{\Gtwo}\)-complete.
\end{theorem}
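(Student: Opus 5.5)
The plan has two halves: hardness by a reduction from Rudolph's \(\Gtwo\) 4-QSAT through the weighted \KK{} complex followed by the blow-up, and containment by a direct verifier.

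\emph{Hardness.} Start from a frustration-free source Hamiltonian \(H\) with integer rank-one projectors and promise \(\pH=1/q(n)\). Build the weighted base graph \(\GH\) and the complex \(\GammaH\) with weights \(w(v)\in\{1,\lambda\}\), choosing \(\lambda\) as the reciprocal of a polynomially bounded integer. Replace each vertex \(v\) by a clique of \(\bfmult_v\) copies, where \(\bfmult_v\) is proportional to \(w(v)^2\) so that the ratios are integral; this gives \(\bGH\), and adjacent blocks are fully joined. By construction \(\Cl(\bGH)\) is flag and of polynomial size \(N\), and we keep \(k=2n-1\).

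On the symmetric sector \(\bT_{2n-1}\), spanned by the tensor products of uniform block superpositions, I would invoke \Cref{lem:symmetric-reduction} to obtain \(\Phi\bDelta\Phi^{-1}=M\Delta_{\GammaH,2n-1}\). On the complement I would use the bound \(\bDelta|_{\bT^\perp}\succeq(\min_v\bfmult_v)I\succeq I\) from \Cref{thm:asymmetric-decoupling}.

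In the YES case, a kernel vector of \(\Delta_{\GammaH}\) (the \KK{} filling argument applied to an encoded ground state of \(H\)) lifts through \(\Phi^{-1}\) to a kernel vector in \(\bT\), so \(H_{2n-1}\neq0\). In the NO case, the imported weighted many-gadget bound \(E_{\mathrm{wt}}\ge1/\poly(n)\) and the invariant splitting give \(\bDelta\succeq\min\{ME_{\mathrm{wt}},1\}I\). Since \(M\ge1\), this is at least \(1/\poly(n)\ge N^{-c_{\mathrm{gap}}}\) once \(c_{\mathrm{gap}}\) is large, because \(N\) is polynomial in \(n\). A positive minimum eigenvalue also forces \(H_{2n-1}=0\).

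\emph{Containment.} Given \((G,k)\), the verifier receives a witness on the \(k\)-simplex register and works with the normalized \(\Delta_k/\Lambda\), where \(\Lambda\) is an integer upper bound on \(\lambda_{\max}(\Delta_k)\), for example \(N\) times the maximum degree. The matrix \(\Delta_k\) is sparse with integer entries, and its row structure is computable with \(\Gtwo\) gates given \(G\), since checking cliques is classical reversible logic. Rudolph's exact LCU construction then implements a block encoding of an operator such as \(I-\Delta_k/\Lambda\) with exact amplitudes. I would use phase estimation or amplification adapted to the promise: in the YES case a harmonic witness is accepted with probability exactly \(1\), and in the NO case every state has energy at least \(\pgap/\Lambda\), which yields rejection probability at least \(1/\poly\). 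The basis \(k\)-cliques can be encoded with a validity check, since invalid strings are rejected with certainty.

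\emph{Main obstacle.} I expect the main obstacle to be perfect completeness for the containment direction. The difficulty is that the LCU normalization must be exactly representable in \(\Gtwo\), so that the acceptance probability in the YES case is exactly \(1\). I would first try padding to a power-of-two normalization with integer coefficients, as in Rudolph's scheme. On the hardness side, the delicate point is choosing integer multiplicities that reproduce the \KK{} weight \(\lambda\) exactly; I would take \(\lambda=1/L\) with \(\bfmult_v\in\{L^2,1\}\) up to a common factor.
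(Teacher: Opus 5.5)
Your proposal is correct and follows the paper's proof essentially step for step. For hardness, it uses the binary level-ratio blow-up of the weighted \KK{} complex (register blocks of size $\ratio=\lambda^{-2}$, gadget vertices kept as singletons so that $\min_v\bfmult_v=1$), the symmetric reduction giving $M\Delta_{\GammaH,2n-1}$ on $\bT_{2n-1}$, and the bound $\bDelta\succeq I$ on $\bT_{2n-1}^\perp$; for containment, it feeds $\Delta_k$, normalized by a power of two and with invalid strings penalized, to Rudolph's exact sparse-Hamiltonian verifier.

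Two small precisions. The paper fixes $\lambda=c_{\mathrm{wt}}\pH/t$ exactly by taking $\ratio=(C_{\mathrm{wt}}tq(n))^2$, which corresponds to your $L=C_{\mathrm{wt}}tq(n)$. The final comparison $1/\poly(n)\ge N^{-c_{\mathrm{gap}}}$ rests on the lower bound $N\ge n$; the polynomial upper bound on $N$, which you cite, is what makes the reduction efficient rather than what gives this inequality.
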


The hardness direction is formalized by the following gap-preserving reduction.
Containment is proved separately in \Cref{sec:containment}, using exact sparse
Hamiltonian simulation for the resulting integer clique Laplacian.

\begin{theorem}[Gap-preserving reduction]
\label{thm:gap-preserving-reduction}
Let \(H=\sum_{i=1}^t\phi_i\) be a Hamiltonian from the fixed finite family of
\(\Gtwo\)-integer local projectors recalled in
\Cref{sec:complexity-conventions}, with
\(\lambda_{\min}(H)=0\) in the YES case and
\(\lambda_{\min}(H)\ge \pH=1/q(n)\) in the NO case, where \(q(n)\) is a
positive integer-valued polynomial. The reduction outputs
an unweighted graph \(\bGH\), its clique complex
\(\bGammaH=\Cl(\bGH)\), and the integer \(k=2n-1\) such that
\[
  \lambda_{\min}(H)=0
  \Rightarrow H_{2n-1}(\bGammaH;\R)\ne0,
\]
whereas
\[
  \lambda_{\min}(H)\ge \pH
  \Rightarrow
  \lambda_{\min}(\bDelta_{2n-1}(\bGammaH))\ge \pgap(N)
\]
for the fixed promise \(\pgap(N)=N^{-c_{\mathrm{gap}}}\) in
\Cref{def:unweighted-gapped-clique-homology}.
\end{theorem}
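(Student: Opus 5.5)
The plan is to build \(\bGH\) as a blow-up of the weighted \KK{} base graph \(\GH\) and transfer the weighted gap analysis through a symmetric/asymmetric splitting. Start from the weighted \KK{} instance \(\GammaH=\Cl(\GH)\) for the source Hamiltonian \(H\), with register vertices of weight \(1\) and gadget vertices of weight \(\lambda\ll1\). Choose \(\lambda\) inverse polynomial and rational, so that all squared vertex weights are ratios of polynomially bounded integers. The blow-up replaces each vertex \(v\) by an independent set of \(\bfmult_v\) copies. Two copies \(\hat u,\hat v\) are joined iff \(u=v\) is false and \(uv\in E(\GH)\). Cliques of \(\bGH\) are then exactly choices of one copy per vertex of a clique of \(\GH\), so \(\bGammaH\) is flag. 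The multiplicities are chosen with \(\bfmult_v/M\) equal to \(w(v)^2\) up to a common factor, where \(M\) is the top multiplicity; thus the register blocks have size \(M\) and the gadget blocks have size \(\approx\lambda^2M\). Since all \(\bfmult_v\) are polynomial, \(N=|V(\bGH)|\) is polynomial in \(n\).

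Next I decompose \(C_{2n-1}(\bGammaH;\R)\) under the product of block symmetric groups, which acts by graph automorphisms and hence commutes with \(\bDelta_{2n-1}\). This gives \(\bT_{2n-1}\oplus\bT_{2n-1}^\perp\). On \(\bT_{2n-1}\) I apply \Cref{lem:symmetric-reduction}: the map \(\Phi\) sends \(\ket{\widehat{\bm\sigma}}\) to the (rescaled) simplex \(\sigma\). Under it, the coboundary \(\bd\) restricted to symmetric chains acts like \(d\) with entry \(\sqrt{\bfmult_v}\) for an added vertex \(v\), which is \(\sqrt M\,w(v)\). Hence \(\Phi\bDelta\Phi^{-1}=M\Delta_{\GammaH}\).

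\textbf{YES case.} A frustration-free witness \(\ket\psi\in\ker H\) gives a weighted harmonic chain, and pulling it back gives a nonzero element of \(\ker\bDelta_{2n-1}\). Alternatively, the homology of the blow-up can be argued directly: the lifted encoded cycle is a cycle, and it is not a boundary because the symmetrization projection commutes with \(\bpartial\). Either route gives \(H_{2n-1}(\bGammaH;\R)\neq0\).

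\textbf{NO case.} The imported \KK{} many-gadget bound gives \(\Delta_{\GammaH,2n-1}\succeq E_{\mathrm{wt}}I\) with \(E_{\mathrm{wt}}\ge\mathrm{poly}(\pH,\lambda)\). This yields the bound \(ME_{\mathrm{wt}}\) on \(\bT_{2n-1}\).

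The main obstacle is \Cref{thm:asymmetric-decoupling}: the bound \(\bDelta_{2n-1}|_{\bT^\perp}\succeq(\min_v\bfmult_v)I\). My plan is a local averaging argument. A chain in \(\bT^\perp\) has, for some block \(v\), a nonzero component orthogonal to constants in the copy index of \(v\). Locally around each simplex, the link structure is a join of \(\Cl\) of the block (a discrete set of \(\bfmult_v\) points, i.e. \(\overline{K}\)-type) with the rest. I would use \Cref{lem:horak-jost} to split the Laplacian on each local join summand. The augmented degree-\(0\) Laplacian of a \(\bfmult_v\)-point discrete complex is then \(\bfmult_v\) on non-constant vectors, since its coboundary has degree \(\bfmult_v\) into the block and nothing else. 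Adding positive semidefinite contributions gives the bound. The delicate point is that \(\bGammaH\) is not globally a join, so the terms must be assembled as a sum of operator inequalities: write \(\bDelta=\sum\) of up/down pieces, group them by the block of the differing vertex, and verify via a Cauchy–Schwarz/Garland-type estimate that the cross terms between different blocks cancel or are nonnegative.

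Finally I combine the two sectors to get \(\bDelta_{2n-1}\succeq\min\{ME_{\mathrm{wt}},\min_v\bfmult_v\}I\). I then pick \(M\) polynomial and take \(c_{\mathrm{gap}}\) large enough that this quantity is at least \(N^{-c_{\mathrm{gap}}}\).
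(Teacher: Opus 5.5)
\textbf{There is a genuine gap, exactly at the step you flag as the main obstacle.}

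You blow each base vertex up into an \emph{independent set} of copies. The paper blows it up into a \emph{clique} $K_{\bfmult_v}$, and the difference is essential. Your key claim is that the augmented degree-$0$ Laplacian of an $f$-point discrete complex $D_f$ equals $f$ on non-constant vectors. This is backwards. $D_f$ has no edges, so $\DeltaAug_0(D_f)=d_{-1}d_{-1}^{\dagger}=J$, the all-ones matrix. It is $f$ on the constant vector and $0$ on its orthogonal complement: the non-constant vectors are exactly the reduced classes $\widetilde H_0(D_f)\cong\R^{f-1}$.

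Consequently your blow-up manufactures homology:
\begin{itemize}
\item a single base edge with two copies per endpoint becomes the $4$-cycle $K_{2,2}$;
\item the square $B_0=\{v_1,v_3\}*\{v_2,v_4\}$ becomes $\Cl(K_{2M,2M})$, a wedge of $(2M-1)^2$ circles;
\item your expanded $n$-qubit register has $\dim\ker\bDelta_{2n-1}=(8M^2-7M+1)^n\gg 2^n$.
\end{itemize}
So $\bDelta|_{\bT^\perp}\succeq\min_v\bfmult_v$ is false for your graph already on the bare register. No Garland-type bookkeeping of cross terms can rescue a false inequality.

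Attaching gadgets does not restore control. For fixed $x\perp\mathbf 1$ on a block $v$, the chains $x\otimes\mu$ form a $\bDelta$-invariant sector. On it, $\bDelta_{2n-1}$ acts as the augmented Laplacian of the blown-up link of $v$ in degree $2n-2$, with no additive $\bfmult_v$ term. Its symmetric part is $M$ times the weighted Laplacian of $\mathrm{lk}_{\GammaH}(v)$, and the \KK{} many-gadget theorem says nothing about that operator. Hence the NO-case bound $\bDelta_{2n-1}\succeq\min\{ME_{\mathrm{wt}},\min_v\bfmult_v\}I$ does not follow. Your YES direction and your symmetric reduction are fine.

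\textbf{The repair is to make every block a clique, as in \Cref{def:level-ratio-blowup}.} Then the star of block $v_j$ is $\Cl(K_{\widehat{\bm B}_{v_j}})*\widehat{\bm L}_{v_j}$. By \Cref{lem:complete-graph-laplacian}, $\DeltaAug_r(\Cl(K_f))=fI$ for every $r\ge-1$. A chain with $\widehat{\bm A}_{v_j}\psi=0$ has no component in the $r=-1$ summand, since $\widehat{\bm A}_{v_j}$ fixes that summand. The join formula then gives energy at least $\bfmult_{v_j}$.

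Cross terms never arise, for two reasons:
\begin{itemize}
\item the projectors $\widehat{\bm Q}_j=\widehat{\bm A}_{v_1}\cdots\widehat{\bm A}_{v_{j-1}}(I-\widehat{\bm A}_{v_j})$ commute with $\bDelta$ and orthogonally decompose $\bT^\perp$;
\item every $\psi\in\im\widehat{\bm Q}_j$ is supported on simplices meeting $\widehat{\bm B}_{v_j}$, so its quadratic form coincides with that of the star.
\end{itemize}

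With clique blocks, the symmetric reduction needs one check your version did not. Cofaces containing two copies of the same block cancel in pairs by antisymmetry. Hence $\bd$ still maps $\bT$ into $\bT$ with coefficients $\sqrt{\bfmult_b}$.
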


\begin{proof}[Proof sketch]
The YES direction uses the weight-free selectivity of the \KK{} gadget and
homology preservation under the blow-up. The NO direction is the technical
gap-transfer theorem \Cref{thm:unweighted-no-gap}. The unweighted blow-up chain
space decomposes into the symmetric sector described in
\Cref{sec:blowup-reduction} and its orthogonal complement. On the symmetric
sector, the Laplacian is unitarily equivalent to \(M\) times the corresponding
weighted Laplacian, so the existing weighted many-gadget energy bound applies.
On the orthogonal complement, \Cref{thm:asymmetric-decoupling} gives a direct
lower bound. Thus the blow-up does not create additional low-energy degrees of
freedom. The intended weighted problem is recovered on the symmetric sector,
and all copy-label-dependent directions are energetically removed. Combining
these two bounds gives the promised inverse-polynomial gap.
\end{proof}

\begin{proof}[Proof of \Cref{thm:main}]
\Cref{thm:gap-preserving-reduction} gives \(\QMAone^{\Gtwo}\)-hardness. The
explicit choices of \(\ratio\) and \(M\) in the proof of
\Cref{thm:unweighted-no-gap} make the output size polynomial and give a fixed
bound \(N^{-c_{\mathrm{gap}}}\) for a sufficiently large constant
\(c_{\mathrm{gap}}\). The containment statement is
\Cref{thm:containment}. Together they give completeness.
\end{proof}

\section{Blow-up and Symmetric Reduction}
\label{sec:blowup-reduction}

This section is independent of the \KK{} construction. We work with an
arbitrary clique complex \(\Gamma\) equipped with a level function on its
vertices, and record the formal properties of the associated unweighted
blow-up.

The level function records how the intended weighted metric should scale from
one vertex block to the next.
If \(v\) has level \(\ell(v)\), then the blow-up gives it block
multiplicity \(\bfmult_v=M/\ratio^{\ell(v)}\). The
arguments below depend only on the clique-complex structure, the complete
multipartite blow-up, and the copy-relabeling symmetry inside each block.

\subsection{The level-ratio blow-up}

\begin{definition}[Level-ratio blow-up]
\label{def:level-ratio-blowup}
Let \(\Gamma=\Cl(G)\) be a base clique complex of a graph \(G=(V,E)\) with a level function
\(\level:V\to \Z_{\ge0}\), and write
\[
  L=\max_{v\in V}\level(v).
\]
Choose integers \(\ratio\ge2\) and \(M\ge1\) such that \(M\) is an integer
multiple of \(\ratio^L\). Equivalently, \(M=\widetilde M\ratio^L\) for some
\(\widetilde M\in\Z_{\ge1}\). Set
\[
  \bfmult_v=\bfmult(\level(v))=\frac{M}{\ratio^{\level(v)}}.
\]
The blow-up graph replaces each base vertex \(v\) by a clique \(K_{\bfmult_v}\), and
each base edge \(uv\) by the complete bipartite graph between the corresponding
blocks \(K_{\bfmult_u}\) and \(K_{\bfmult_v}\). The output complex is the clique complex of this graph.
When no special name is needed, we denote the output complex by \(\bGamma\).
Every \(\bfmult_v\) is then a positive integer, and the smallest block has size
\(\min_v \bfmult_v=M/\ratio^L=\widetilde M\).
\end{definition}

We write \(\bd_k:C_k(\bGamma)\to C_{k+1}(\bGamma)\) for the unweighted
coboundary of the blow-up, and
\(\bpartial_{k+1}=(\bd_k)^\dagger\) for its adjoint in the ordinary
simplex-basis inner product. The blow-up Hodge Laplacian is
\[
  \bDelta_k
  =
  \bpartial_{k+1}\bd_k+\bd_{k-1}\bpartial_k.
\]
The multiplicity ratio is chosen so that
\[
  \frac{\bfmult(l+1)}{\bfmult(l)}=\frac{1}{\ratio}.
\]
Recall from \Cref{sec:preliminaries} that the weighted orthonormal simplex
basis is obtained by rescaling
\[
  \ket{\sigma}'=w(\sigma)^{-1}\ket{\sigma},
  \qquad
  w(\sigma)=\prod_{v\in\sigma}w(v).
\]
With this convention, adjoining a vertex \(v\) contributes the factor \(w(v)\)
to a coboundary matrix element. In the blow-up, the symmetric \(0\)-chain on a
level-\(l\) block is
\[
  \ket{\widehat{\bm v}}
  =
  \frac{1}{\sqrt{\bfmult(l)}}\sum_{i=1}^{\bfmult(l)}\ket{v_i}.
\]
Under the symmetric-sector identification, the induced amplitude weight is
\(\sqrt{\bfmult(l)}\). The top level has induced weight \(\sqrt M\), whereas
the corresponding weighted vertex has weight \(1\). Thus, when comparing with
the weighted profile, we remove this common top-level factor. The resulting
relative symmetric weight is
\[
  \sqrt{\frac{\bfmult(l)}{M}}
  =
  \ratio^{-l/2}.
\]
Thus one level step changes the induced vertex weight by
\(\ratio^{-1/2}\). In the weighted gadget notation this factor is
\(\lambda\), so \(\lambda^2=1/\ratio\). In the later application, \(L\) depends only on the fixed local
gadget family and is therefore constant. Choosing \(M\) polynomially large then
keeps the whole blow-up polynomial size.

\begin{table}[H]
  \centering
  \begin{tabular}{c|c|c|c}
    vertex level \(l\) & block size \(\bfmult(l)\) & relative symmetric weight
    \(\sqrt{\bfmult(l)/M}\) & weighted parameter \\
    \hline
    \(0\) & \(M\) & \(1\) & \(1\) \\
    \(1\) & \(M/\ratio\) & \(\ratio^{-1/2}\) & \(\lambda\) \\
    \(2\) & \(M/\ratio^2\) & \(\ratio^{-1}\) & \(\lambda^2\) \\
    \(\vdots\) & \(\vdots\) & \(\vdots\) & \(\vdots\) \\
    \(l\) & \(M/\ratio^l\) & \(\ratio^{-l/2}\) & \(\lambda^l\)
  \end{tabular}
  \caption{Level correspondence for the blow-up. The third column records the
  symmetric-sector amplitude weight after dividing by the common top-level
  factor \(\sqrt M\). Increasing the vertex level by one divides the block size
  by \(\ratio\), which gives exactly the weight step
  \(\lambda=\ratio^{-1/2}\). The \KK{} application below uses only levels \(0\)
  and \(1\).}
  \label{tab:level-correspondence}
\end{table}

\Cref{tab:level-correspondence} summarizes the relation between vertex levels,
block multiplicities, and the induced weighted parameters.

The construction should be read as replacing a weighted vertex by a block of
indistinguishable copies. 
The raw blow-up is unweighted: every simplex basis vector has norm \(1\). The
weighted metric appears only after restricting to the fully symmetric averages
over these copy choices. \Cref{fig:blowup-correspondence} illustrates this
replacement on a small example.

\begin{figure}[t]
  \centering
  \includegraphics[width=\textwidth]{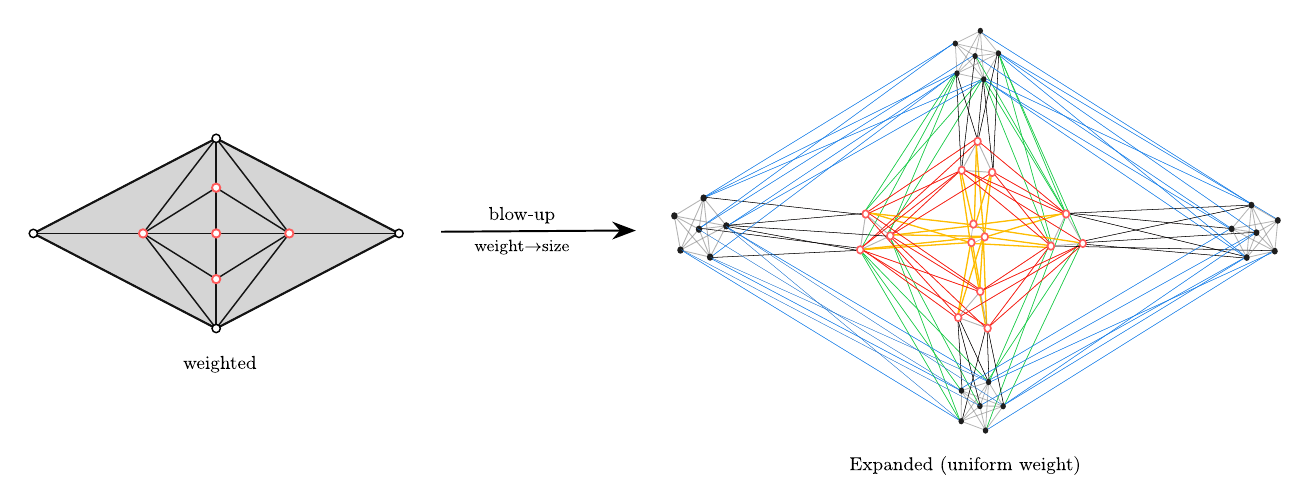}
  \caption{The blow-up correspondence. A weighted base complex is replaced by an
  unweighted expanded complex in which vertex weights are represented by block
  sizes. For simplicity, only the expanded vertices and incomplete edges are drawn.
  Symmetric averages over one-copy-per-block choices reconstruct the weighted
  Hodge metric. The illustration uses block sizes \(6\) and \(3\), exhibiting
  an integral ratio of \(2\).}
  \label{fig:blowup-correspondence}
\end{figure}

\subsection{Symmetric sector}

Each vertex \(v\) of the base complex has been replaced by a clique block
\(\widehat{\bm B}_v\) of \(\bfmult_v\) indistinguishable copies. Thus the
blow-up carries a natural copy-relabeling symmetry: we may rename the vertices
inside \(\widehat{\bm B}_v\) without changing the underlying construction. The
sector relevant to the weighted base complex is the part that does not see
these labels, namely the span of uniform averages over all copy choices.

Here ``symmetric'' means only ``unchanged when copies inside a block are
renamed.'' The argument below uses ordinary orthogonal projections that average
over these renamings.

Formally, let \(\BlockSym=\prod_{v\in V} S_{\widehat{\bm B}_v}\) act by
renaming copies inside each block. We write \(\bT_\bullet\) for the subspace fixed by this action
and \(\bT_\bullet^\perp\) for its orthogonal complement. The bullet denotes
the graded collection over all chain degrees. Thus \(\bT_\bullet\) is simply
the subspace fixed by every copy relabeling.

For a base vertex \(v\), write
\[
  \ket{\widehat{\bm v}}
  =
  \frac{1}{\sqrt{\bfmult_v}}\sum_{v_i\in\widehat{\bm B}_v}\ket{v_i}
\]
for the uniform \(0\)-chain on \(\widehat{\bm B}_v\). For an oriented base
simplex \(\sigma=(v_0,\ldots,v_k)\) and a copy choice
\(\mathbf i=(i_0,\ldots,i_k)\), write
\[
  \widetilde\sigma_{\mathbf i}
  =
  \bigl((v_0)_{i_0},\ldots,(v_k)_{i_k}\bigr)
\]
for the corresponding oriented simplex in the blow-up. Define the
one-copy-per-block average of \(\sigma\), meaning the average over simplices
that use exactly one copy from each block appearing in \(\sigma\), by
\[
  \ket{\widehat{\bm{\sigma}}}
  =
  \ket{\widehat{\bm v_0}}*\cdots *\ket{\widehat{\bm v_k}}
  =
  \frac{1}{\sqrt{\prod_{v\in\sigma}\bfmult_v}}
  \sum_{i_0,\ldots,i_k}
  \ket{\widetilde\sigma_{\mathbf i}},
\]
where the sum ranges over choices of one copy from each block. For example, an
edge \(uv\) of the base graph becomes the complete bipartite family of edges
between the \(u\)-block and the \(v\)-block, and its symmetric representative is
\[
  \ket{\widehat{\bm{uv}}}
  =
  \frac{1}{\sqrt{\bfmult_u\bfmult_v}}
  \sum_{i=1}^{\bfmult_u}\sum_{j=1}^{\bfmult_v}\ket{u_i v_j}.
\]
These are the chains that remember the base simplex but forget the copy labels.

Only these one-copy-per-block averages survive in the fixed subspace. If a
simplex contains two copies from the same block, swapping those two copies
fixes the underlying vertex set but reverses the orientation. Hence its average
over all copy renamings is equal to its negative, and is therefore zero. Thus
\[
  \bT_k
  =
  \operatorname{span}\{\ket{\widehat{\bm{\sigma}}}:\sigma\in\Gamma_k\}.
\]
In particular, repeated use of one block creates no additional symmetric
sector.

\begin{lemma}[Symmetric reduction]
\label{lem:symmetric-reduction}
There is an isometric isomorphism
\[
  \Phi:\bT_\bullet \longrightarrow C_\bullet(\Gamma,\wtilde),
  \qquad
  \wtilde(v)=\sqrt{\bfmult_v},
\]
where \(C_\bullet(\Gamma,\wtilde)\) denotes the weighted simplex Hilbert space from
\Cref{sec:preliminaries}. Let \(\partial_{\wtilde}\) and
\(\Delta_{\wtilde}\) denote the weighted adjoint boundary and Hodge Laplacian
on \(\Gamma\), degree by degree. Then \(\Phi\) intertwines the algebraic
coboundary and its weighted adjoint, and hence
\[
  \Phi\,\bDelta_k\big|_{\bT_k}\,\Phi^{-1}
  =
  \Delta_{\wtilde,k}
  \qquad\text{for every }k.
\]
\end{lemma}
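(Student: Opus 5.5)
The plan is to define \(\Phi\) on the basis of \(\bT_k\) by \(\Phi\ket{\widehat{\bm\sigma}}=\ket{\sigma}'\), where \(\ket{\sigma}'=\wtilde(\sigma)^{-1}\ket{\sigma}\) is the weighted orthonormal simplex vector. We then check in turn that \(\Phi\) is an isometry, that it intertwines coboundaries, and that it intertwines their adjoints.

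\textbf{Isometry.} Distinct base simplices \(\sigma\ne\sigma'\) give averages \(\ket{\widehat{\bm\sigma}},\ket{\widehat{\bm\sigma}'}\) supported on disjoint sets of blow-up simplices. Indeed, a blow-up simplex with one copy per block determines its base simplex. Each average is a unit vector, since it is a normalized sum of \(\prod_{v\in\sigma}\bfmult_v\) orthonormal simplices. So \(\{\ket{\widehat{\bm\sigma}}\}_{\sigma\in\Gamma_k}\) is an orthonormal basis of \(\bT_k\), and \(\Phi\) sends it bijectively onto the weighted orthonormal basis \(\{\ket{\sigma}'\}\). Hence \(\Phi\) is a unitary isomorphism.

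\textbf{Coboundary.} Next compute \(\bd_k\ket{\widehat{\bm\sigma}}\). Coboundaries commute with the relabeling group \(\BlockSym\), because relabelings are simplicial automorphisms. Therefore \(\bd_k\) maps \(\bT_k\) into \(\bT_{k+1}\), and it suffices to compute the coefficient of each \(\ket{\widehat{\bm\tau}}\).

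For a fixed one-copy simplex \(\widetilde\sigma_{\mathbf i}\), the cofaces split into two kinds. The first kind adds a vertex \(u_j\) with \(u\notin\sigma\) and \(\tau=\sigma\cup\{u\}\in\Gamma\). Flagness of the blow-up guarantees every such clique is present. The second kind adds a second copy from a block already in \(\sigma\). After symmetrization the second kind contributes to chains with a repeated block, and those lie in \(\bT^\perp\) by the sign argument given before the lemma. Since \(\bd_k\ket{\widehat{\bm\sigma}}\) is symmetric, these contributions must cancel; I would verify this directly by pairing terms under the transposition of the two copies.

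The first kind gives
\[
\frac{1}{\sqrt{\prod_{v\in\sigma}\bfmult_v}}\sum_{\mathbf i,j}[\tau:\sigma]\ket{\widetilde\tau_{\mathbf i,j}}
=[\tau:\sigma]\sqrt{\bfmult_u}\,\ket{\widehat{\bm\tau}},
\]
using that incidence signs of blow-up simplices agree with those of the base. This matches \(\bra{\tau}'d_k\ket{\sigma}'=[\tau:\sigma]\wtilde(u)\). Hence \(\Phi\,\bd_k|_{\bT_k}=d_k\Phi\) in the weighted orthonormal bases.

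\textbf{Adjoints and the Laplacian.} Since \(\BlockSym\) acts by permutation matrices (up to sign), it is orthogonal and commutes with \(\bd\). Hence \(\bT\) and \(\bT^\perp\) are invariant under both \(\bd\) and \(\bpartial=\bd^\dagger\). Restricting adjoints to an invariant orthogonal decomposition gives \((\bd|_{\bT})^\dagger=\bpartial|_{\bT}\). Because \(\Phi\) is unitary, \(\Phi\bpartial_{k+1}|_{\bT}\Phi^{-1}=(d_k)^{\dagger_{\wtilde}}=\partial_{\wtilde,k+1}\). Substituting into \(\bDelta_k=\bpartial_{k+1}\bd_k+\bd_{k-1}\bpartial_k\) yields the stated identity in every degree.

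The main obstacle I expect is the bookkeeping in the coboundary step. It requires pinning down orientation signs consistently between base and blow-up simplices, and confirming that the repeated-block cofaces really drop out of the symmetric part rather than leaking into it. The invariance argument shows leakage is impossible in principle. I would still check one explicit case, such as a vertex going to an edge inside a block, to confirm the sign convention.
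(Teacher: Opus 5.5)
Your proposal is correct and follows the paper's proof: the same map \(\Phi\), the same coefficient computation \([\tau:\sigma]\sqrt{\bfmult_u}\), the same transposition cancellation of repeated-block cofaces, and adjoints obtained from unitarity of \(\Phi\). Your invariance argument (\(\BlockSym\) commutes with \(\bd\) and \(\bpartial\), so \(\bT\) is invariant under both) adds something the paper leaves implicit: it shows that the matrix entries computed between symmetric vectors give the genuine restriction of \(\bDelta_k\) to \(\bT_k\), not merely its compression.
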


\begin{proof}
The vectors \(\ket{\widehat{\bm{\sigma}}}\) form an orthonormal basis of
\(\bT_k\). Send \(\ket{\widehat{\bm{\sigma}}}\) to the normalized base vector
\[
  \ket{\sigma}'_{\wtilde}
  =
  \wtilde(\sigma)^{-1}\ket{\sigma}
  =
  \left(\prod_{v\in\sigma}\bfmult_v\right)^{-1/2}\ket{\sigma}
\]
in the weighted degree-\(k\) simplex space \(C_k(\Gamma,\wtilde)\). This defines an isometry
\(\Phi\).

Fix one incidence \(\tau=\sigma\cup\{b\}\), and put
\(F_\sigma=\prod_{v\in\sigma}\bfmult_v\). With the notation above, write
\(\widetilde\tau_{\mathbf i,j}=\widetilde\sigma_{\mathbf i}\cup\{b_j\}\).
We now check the coboundary in two parts. First, the component along the
ordinary base coface \(\tau\) has the weighted coefficient expected from the
base complex. Second, all cofaces that use the same base block twice cancel,
so no extra symmetric component remains.
Let \(\Pi_\tau\) be the projection onto the span of the one-copy-per-block
copies of \(\tau\). Then the \(\tau\)-component of the blow-up coboundary is
\[
  \Pi_\tau\bd_k\ket{\widehat{\bm{\sigma}}}
  =
  [\tau:\sigma]\frac{1}{\sqrt{F_\sigma}}
  \sum_{\mathbf i}\sum_{j=1}^{\bfmult_b}\ket{\widetilde\tau_{\mathbf i,j}}
  =
  [\tau:\sigma]\sqrt{\bfmult_b}\ket{\widehat{\bm{\tau}}}.
\]
Taking the matrix coefficient between normalized orbit averages gives
\[
  \bra{\widehat{\bm{\tau}}}\,\bd_k\,
  \ket{\widehat{\bm{\sigma}}}
  =
  [\tau:\sigma]\sqrt{\bfmult_b}.
\]
On the weighted base complex, the coboundary-first convention from
\Cref{sec:preliminaries} gives exactly the same coefficient:
\[
  \bra{\tau}'_{\wtilde}d_k\ket{\sigma}'_{\wtilde}
  =
  [\tau:\sigma]\frac{\wtilde(\tau)}{\wtilde(\sigma)}
  =
  [\tau:\sigma]\sqrt{\bfmult_b}.
\]

It remains to check that no coface using the same base block twice contributes
to the symmetric sector. Suppose such a coface \(\eta\) contains two copies
\(a_r,a_s\) from the same block. Then \(\eta\) can arise in two ways, by adding
\(a_r\) to \(\eta\setminus\{a_r\}\) or by adding \(a_s\) to
\(\eta\setminus\{a_s\}\). These two contributions have opposite incidence signs:
\[
  [\eta:\eta\setminus\{a_r\}]
  =
  -[\eta:\eta\setminus\{a_s\}],
\]
because the transposition of \(a_r\) and \(a_s\) fixes the underlying
unoriented simplex and reverses the orientation. Hence their sum is zero.
Thus repeated-block cofaces do not appear in the symmetric sector, and
\(\Phi\bd_k\Phi^{-1}=d_k\).

Since \(\Phi\) is an isometry, taking adjoints in
\(\Phi\bd_k\Phi^{-1}=d_k\) gives
\[
  \Phi\bpartial_{k+1}\Phi^{-1}
  =
  \partial_{\wtilde,k+1}.
\]
Thus the up and down terms agree separately, and for all base \(k\)-simplices
\(\sigma,\sigma'\),
\[
  \bra{\widehat{\bm{\sigma'}}}\bDelta_k
  \ket{\widehat{\bm{\sigma}}}
  =
  \bra{\sigma'}'_{\wtilde}\Delta_{\wtilde,k}
  \ket{\sigma}'_{\wtilde}.
\]
This is precisely
\(\Phi\,\bDelta_k\big|_{\bT_k}\,\Phi^{-1}=\Delta_{\wtilde,k}\).
\end{proof}

We now isolate the global rescaling introduced by the top block size \(M\).
For the block multiplicities \(\bfmult_v=M\ratio^{-\level(v)}\), put
\(w_\ratio(v)=\ratio^{-\level(v)/2}\). Then
\[
  \wtilde(v)=\sqrt{\bfmult_v}
  =\sqrt M\,\ratio^{-\level(v)/2}
  =\sqrt M\,w_\ratio(v).
\]
Every coboundary incidence adjoins exactly one vertex, so the matrix of the
same algebraic coboundary \(d_k\) in the respective orthonormal simplex bases is
rescaled by \(\sqrt M\):
\[
  \bra{\tau}'_{\wtilde}d_k\ket{\sigma}'_{\wtilde}
  =
  \sqrt M\,
  \bra{\tau}'_{w_\ratio}d_k\ket{\sigma}'_{w_\ratio}.
\]
Taking adjoints gives the same rescaling for the weighted boundaries
\(\partial_{\wtilde}\) and \(\partial_{w_\ratio}\).
Both the up- and down-Laplacians are therefore multiplied by \(M\), so
\[
  \Delta_{\wtilde,k}=M\Delta_{w_\ratio,k}.
\]
In the later \KK{} application, \(w_\ratio(v)=\lambda^{\level(v)}\) with
\(\lambda=1/\sqrt{\ratio}\).

Thus the top size \(M\) is not discarded as a harmless normalization. It gives
an exact common factor \(M\) in the symmetric-sector Laplacian, while adjacent
levels encode the perturbative parameter
\(\lambda=1/\sqrt{\ratio}\).

\subsection{Gap outside the symmetric sector}

Let \(\Gamma=\Cl(G)\) be the finite base clique complex with vertex set \(V\), and
let \(\bGamma\) be its unweighted blow-up. We write
\(\widehat{\bm V}=V(\bGamma)\) for the expanded vertex set. For each
\(v\in V\), let \(\widehat{\bm B}_v\subseteq\widehat{\bm V}\) be the clique
block defined above, so \(|\widehat{\bm B}_v|=\bfmult_v\). The copy relabeling group
\[
  \BlockSym=\prod_{v\in V}S_{\widehat{\bm B}_v}
\]
acts on \(C_k(\bGamma;\R)\) by permuting copies inside each block.  Let
\(\bT_k=C_k(\bGamma;\R)^{\BlockSym}\) be the symmetric sector defined above,
let \(\bT_k^\perp\) denote its orthogonal complement in the ordinary
simplex-basis inner product, and let \(\bDelta_k\) be the unweighted Hodge
Laplacian of \(\bGamma\).

\begin{theorem}[Gap outside the symmetric sector]
\label{thm:asymmetric-decoupling}
For every chain degree \(k\),
\[
  \bDelta_k\big|_{\bT_k^\perp}\succeq \min_{v\in V} \bfmult_v.
\]
\end{theorem}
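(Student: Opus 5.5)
The plan is to reduce the statement to a one-block estimate and then prove that estimate by a local join decomposition around the block. For each base vertex \(v\), let \(P_v\) be the orthogonal projection onto the chains invariant under \(S_{\widehat{\bm B}_v}\), that is, the average over relabelings inside \(\widehat{\bm B}_v\).

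\emph{Step 1: reduce to one block at a time.} The \(P_v\) for distinct \(v\) commute, since they average over commuting subgroups of \(\BlockSym\). Each \(P_v\) also commutes with \(\bDelta_k\), because relabeling copies is a simplicial automorphism of \(\bGamma\). Hence \(C_k(\bGamma;\R)\) splits into \(\bDelta_k\)-invariant joint eigenspaces of the family \(\{P_v\}_{v\in V}\). The symmetric sector \(\bT_k\) is exactly the joint eigenspace on which every \(P_v=1\). Every other joint eigenspace lies inside \(\operatorname{ran}(I-P_v)\) for some \(v\). It therefore suffices to prove the single-block bound
\[
  \bDelta_k\big|_{\operatorname{ran}(I-P_v)}\succeq \bfmult_v
  \qquad\text{for each } v\in V,
\]
since this gives \(\min_v\bfmult_v\) on every joint eigenspace in \(\bT_k^\perp\). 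This route avoids the factor-\(|V|\) loss that would come from bounding \(I-\prod_v P_v\) by \(\sum_v(I-P_v)\).

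\emph{Step 2: local join structure around one block.} Fix \(v\) and write \(f=\bfmult_v\). Every oriented simplex of \(\bGamma\) factors uniquely as \(\alpha\cup\beta\), with \(\alpha\subseteq\widehat{\bm B}_v\) (possibly empty) and \(\beta\cap\widehat{\bm B}_v=\varnothing\). If \(\alpha\neq\varnothing\), then \(\beta\) must lie in the common link \(\mathrm{Lk}\) of the block, which is the blow-up of the link of \(v\) and is the same for every copy. In that case every nonempty \(\alpha\subseteq\widehat{\bm B}_v\) is allowed, since \(\widehat{\bm B}_v\) is a clique fully joined to \(\mathrm{Lk}\). Simplices with \(\beta\notin\mathrm{Lk}\) occur only with \(\alpha=\varnothing\), so they are fixed by \(S_{\widehat{\bm B}_v}\) and lie in \(\operatorname{ran}P_v\). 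Consequently
\[
  \operatorname{ran}(I-P_v)\subseteq\bigoplus_{\beta\in\mathrm{Lk}} C^{\mathrm{aug}}_\bullet(\Cl(K_f))\otimes\ket{\beta}.
\]
More precisely, \(\operatorname{ran}(I-P_v)\) is the part of this join-type summand that is orthogonal, in the \(\alpha\)-factor, to the \(S_f\)-invariant augmented chains.

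\emph{Step 3: the energy estimate.} I will split the coboundary as \(\bd=\bd_B+\bd_{o}\), where \(\bd_B\) adjoins a vertex of \(\widehat{\bm B}_v\) and \(\bd_o\) adjoins a vertex outside the block, and split \(\bpartial=\bpartial_B+\bpartial_o\) in the same way. For \(x\in\operatorname{ran}(I-P_v)\),
\[
  \langle x,\bDelta_k x\rangle=\|\bd x\|^2+\|\bpartial x\|^2
  =\langle x,(\bpartial_B\bd_B+\bd_B\bpartial_B)x\rangle
  +\langle x,(\bpartial_o\bd_o+\bd_o\bpartial_o)x\rangle+\text{cross terms}.
\]
On the summand of Step 2, the first term is \(\DeltaAug(\Cl(K_f))\otimes I\), which equals \(fI\) in every degree by \Cref{lem:complete-graph-laplacian}. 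The second term is positive semidefinite. So the whole claim reduces to showing that the cross terms \(2\langle x,(\bpartial_B\bd_o+\bd_o\bpartial_B)x\rangle\) and \(2\langle x,(\bpartial_o\bd_B+\bd_B\bpartial_o)x\rangle\) vanish on \(\operatorname{ran}(I-P_v)\).

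In a genuine join these mixed anticommutators vanish by the Koszul sign rule; this is exactly the mechanism behind \Cref{lem:horak-jost}. I expect this cancellation to be the main obstacle, because \(\bGamma\) is not globally a join of the block with anything. The concern is that \(\bd_o\) can push \(\beta\) out of \(\mathrm{Lk}\), or \(\bpartial_B\) can delete the last block vertex and land at \(\alpha=\varnothing\), where the join structure breaks.

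My first attempt to handle this is to check the anticommutator \(\bpartial_B\bd_o+\bd_o\bpartial_B\) entrywise on basis simplices \(\alpha\cup\beta\). Both orders of applying the two maps are defined exactly when \(\alpha\neq\varnothing\) and the added outside vertex keeps \(\beta\) inside \(\mathrm{Lk}\), and in that case the two orders cancel with opposite signs. Any term where \(\beta\) leaves \(\mathrm{Lk}\) forces \(\alpha=\varnothing\), which lands in \(\operatorname{ran}P_v\). Since both operators commute with \(S_{\widehat{\bm B}_v}\), the matrix element between \(\operatorname{ran}(I-P_v)\) and \(\operatorname{ran}P_v\) is zero. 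Together these should give \(\langle x,\bDelta_k x\rangle\ge f\|x\|^2\), which completes Step 1's requirement and hence the theorem.
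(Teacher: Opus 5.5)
Your proposal is correct and is essentially the paper's argument: your joint eigenspaces of the commuting block averages $P_v$ are a regrouping of the paper's telescoping projectors $\widehat{\bm Q}_j$, and your Step~3 is the join formula on the expanded star $\Cl(K_{\widehat{\bm B}_v})*\widehat{\bm L}_v$ combined with $\DeltaAug_r(\Cl(K_f))=fI$ for $r\ge0$. The only difference is that you check the Koszul cancellation by hand in the ambient complex, and you correctly dispose of the terms that leave the link because $x$ has no support on simplices disjoint from the block; the paper instead never forms those cross terms, since it observes that $\bd x$ and $\bpartial x$ stay inside the star and computes $\|\bd x\|^2+\|\bpartial x\|^2$ there directly (for $k=0$, also note that $\bpartial_0=0$ and the up-Laplacian $fI-J$ of $K_f$, with $J$ the all-ones matrix, is still $fI$ on the sum-zero vectors forming $\operatorname{ran}(I-P_v)$).
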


\begin{proof}
Let \(\widehat C_k=C_k(\bGamma;\R)\). For each base vertex \(v\), define the
block-averaging operator
\[
  \widehat{\bm A}_v
  =
  \frac{1}{\bfmult_v!}\sum_{\pi\in S_{\widehat{\bm B}_v}}U_\pi,
\]
where \(U_\pi\) renames the copies in \(\widehat{\bm B}_v\). This is the orthogonal
projector onto chains unchanged by those renamings. The operators
\(\widehat{\bm A}_v\)
commute with one another and with the coboundary, its adjoint, and
\(\bDelta_k\). Their product is the projector onto \(\bT_k\).

Order the base vertices as \(v_1,\ldots,v_R\) where \(|V|=R\), and set
\[
  \widehat{\bm Q}_j
  =
  \widehat{\bm A}_{v_1}\cdots
  \widehat{\bm A}_{v_{j-1}}(I-\widehat{\bm A}_{v_j}).
\]
The \(\widehat{\bm Q}_j\) are pairwise orthogonal projectors, they commute with
\(\bDelta_k\). Indeed, for \(j<\ell\), the product
\(\widehat{\bm Q}_j\widehat{\bm Q}_\ell\) contains
\((I-\widehat{\bm A}_{v_j})\widehat{\bm A}_{v_j}\) and is therefore zero.
The elementary telescoping identity
\[
  I-\prod_{j=1}^R \widehat{\bm A}_{v_j}
  =
  \sum_{j=1}^R \widehat{\bm Q}_j
\]
therefore decomposes \(\bT_k^\perp\) into their images as
\[
  \bT_k^\perp
  =
  \bigoplus_{j=1}^R \im\widehat{\bm Q}_j.
\]
We call \(\im\widehat{\bm Q}_j\) the \(j\)-block asymmetric sector. It is
therefore enough to bound one such sector, with \(j\) fixed.
Note that every
\(\psi\in\im \widehat{\bm Q}_j\) satisfies
\(\widehat{\bm A}_{v_j}\psi=0\): its average over copy names in
\(\widehat{\bm B}_{v_j}\) vanishes.

Consider the degree-\(k\) span of simplices disjoint from the expanded block
\(\widehat{\bm B}_{v_j}\),
\[
  \operatorname{span}
  \bigl\{
    |\sigma\rangle :
    \sigma\in\bGamma,\ \dim\sigma=k,\ \sigma\cap\widehat{\bm B}_{v_j}=\varnothing
  \bigr\}.
\]
Relabeling copies inside \(\widehat{\bm B}_{v_j}\) does nothing on this subspace,
so \(\widehat{\bm A}_{v_j}\) is the identity there. Since
\(\widehat{\bm A}_{v_j}\) is self-adjoint, \(\widehat{\bm A}_{v_j}\psi=0\) makes
\(\psi\) orthogonal to the above subspace. Equivalently, \(\psi\) is
supported on simplices that meet \(\widehat{\bm B}_{v_j}\).

For the fixed base vertex \(v_j\), let
\(L_{v_j}:=\mathrm{lk}_{\Gamma}(v_j)\) be its link in the base complex, explicitly
\[
  L_{v_j}
  =
  \{\tau\in\Gamma:\tau\cap\{v_j\}=\varnothing,\ \tau\cup\{v_j\}\in\Gamma\}.
\]
Let
\(\widehat{\bm L}_{v_j}\) be the blow-up of \(L_{v_j}\) inside \(\bGamma\);
it is supported on expanded vertices outside the block
\[
  V(\widehat{\bm L}_{v_j})
  \subseteq
  \widehat{\bm V}\setminus\widehat{\bm B}_{v_j}.
\]
Also let \(K_{\widehat{\bm B}_{v_j}}\) be the complete graph on the expanded
block \(\widehat{\bm B}_{v_j}\).
Let
\[
  \widehat{\bm S}_{v_j}
  =
  \Cl(K_{\widehat{\bm B}_{v_j}})*\widehat{\bm L}_{v_j}
  \subseteq \bGamma
\]
be the expanded star around the block \(\widehat{\bm B}_{v_j}\), namely the
subcomplex generated by simplices that contain at least one copy from that
block, together with their faces. Thus the first factor is the simplex on the
copies of the base vertex \(v_j\), while the second factor is the blow-up of the
base link \(L_{v_j}\).

The point of introducing \(\widehat{\bm S}_{v_j}\) is locality of the energy
calculation. The preceding support statement says that every
\(\psi\in\im\widehat{\bm Q}_j\) is supported on \(k\)-simplices that meet
\(\widehat{\bm B}_{v_j}\):
\[
  \sigma\in\operatorname{supp}(\psi)
  \quad\Longrightarrow\quad
  \sigma\cap\widehat{\bm B}_{v_j}\ne\varnothing .
\]
Hence \(\sigma\in\widehat{\bm S}_{v_j}\). Moreover,
\[
  \tau\subseteq\sigma
  \quad\Longrightarrow\quad
  \tau\in\widehat{\bm S}_{v_j},
  \qquad
  \sigma\subseteq\eta\in\bGamma
  \quad\Longrightarrow\quad
  \eta\cap\widehat{\bm B}_{v_j}\ne\varnothing
  \quad\Longrightarrow\quad
  \eta\in\widehat{\bm S}_{v_j}.
\]
Thus applying the boundary or coboundary to \(\psi\) sees no simplex outside
\(\widehat{\bm S}_{v_j}\):
\[
  \partial_k\psi\in C_{k-1}(\widehat{\bm S}_{v_j}),
  \qquad
  d_k\psi\in C_{k+1}(\widehat{\bm S}_{v_j}).
\]
Therefore the quadratic form of \(\bDelta_k\) on \(\psi\) agrees with the
quadratic form of the Hodge Laplacian on \(\widehat{\bm S}_{v_j}\).

We have reduced the estimate to the expanded star of the block where the
asymmetry is detected. The block \(\widehat{\bm B}_{v_j}\) is the source of the
energy lower bound, while \(\widehat{\bm S}_{v_j}\) is the face-closed
subcomplex needed to compute the boundary and coboundary contributions.
The join decomposition of \(\widehat{\bm S}_{v_j}\) now separates these two
roles.

By \Cref{lem:horak-jost}, the augmented Laplacian on
\(\widehat{\bm S}_{v_j}\) is the tensor sum of the augmented Laplacians on
\(\Cl(K_{\widehat{\bm B}_{v_j}})\) and on the expanded-link factor. More
explicitly,
\[
  C_k(\widehat{\bm S}_{v_j})
  =
  \bigoplus_{r+s=k-1}
  C_r(\Cl(K_{\widehat{\bm B}_{v_j}}))\otimes
  C_s(\widehat{\bm L}_{v_j}).
\]
The copy-renaming operator \(\widehat{\bm A}_{v_j}\) acts only on the first tensor
factor. On the degree-\(-1\) factor it is the identity, so
\(\widehat{\bm A}_{v_j}\psi=0\) removes that summand. On every remaining summand
\(r\ge0\),
\Cref{lem:complete-graph-laplacian} says that the augmented Laplacian of the
complete simplex \(\Cl(K_{\widehat{\bm B}_{v_j}})\) is \(\bfmult_{v_j}I\), while the
expanded-link Laplacian is positive semidefinite.
Write \(V_j^{\mathrm{loc}}=\{v_j\}\cup V(L_{v_j})\) for the base vertices whose
blocks occur in \(\widehat{\bm S}_{v_j}\).
Hence the augmented Laplacian on \(\widehat{\bm S}_{v_j}\) satisfies, for every
\(\psi\in\im\widehat{\bm Q}_j\),
\[
  \langle\psi,\Delta_k^{\mathrm{aug},\widehat{\bm S}_{v_j}}\psi\rangle
  \ge \bfmult_{v_j}\|\psi\|^2
  \ge \min_{u\in V_j^{\mathrm{loc}}}\bfmult_u\,\|\psi\|^2
  \ge \min_{v\in V}\bfmult_v\,\|\psi\|^2.
\]
For \(k>0\), this is the ordinary Laplacian. When \(k=0\), the augmented
operator differs by the rank-one term \(d_{-1}(d_{-1})^\dagger\), whose image is
the constant \(0\)-chain. This constant chain lies in the symmetric sector for
every block, whereas every vector in \(\im\widehat{\bm Q}_j\) lies in
\(\ker\widehat{\bm A}_{v_j}\). Hence the augmentation term vanishes on
\(\im\widehat{\bm Q}_j\). Thus the same lower bound holds for the ordinary
Laplacian, and applying it on every \(\im\widehat{\bm Q}_j\) proves the claim.
\end{proof}

The proof is local in the chosen block \(v_j\). In particular, the link
\(\widehat{\bm L}_{v_j}\) may be combinatorially complicated and only
positivity of its Laplacian is used. This locality is what makes the same
decoupling applicable after many local attachments are added to the base
complex.

\begin{corollary}[No spurious homology]
\label{cor:no-spurious-homology}
The blow-up creates no homology outside the symmetric sector. In particular,
\[
  H_k(\bGamma;\R)\cong H_k(\Gamma;\R)
\]
after identifying the symmetric sector with the weighted base complex.
\end{corollary}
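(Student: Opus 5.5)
We plan to combine Hodge theory with the invariant decomposition. By the Hodge decomposition recalled in \Cref{sec:preliminaries}, applied to the unweighted complex \(\bGamma\), we have \(H_k(\bGamma;\R)\cong\ker\bDelta_k\). Each \(\widehat{\bm A}_v\) commutes with \(\bDelta_k\), so their product, the projector onto \(\bT_k\), also commutes with \(\bDelta_k\). Hence \(\bDelta_k\) preserves both \(\bT_k\) and \(\bT_k^\perp\), and
\[
  \ker\bDelta_k=\ker\bigl(\bDelta_k|_{\bT_k}\bigr)\oplus\ker\bigl(\bDelta_k|_{\bT_k^\perp}\bigr).
\]

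The first step treats the asymmetric summand. By \Cref{thm:asymmetric-decoupling}, \(\bDelta_k|_{\bT_k^\perp}\succeq\min_v\bfmult_v\,I\). Every \(\bfmult_v\ge1\), since each \(\bfmult_v\) is a positive integer. So this restriction is strictly positive and its kernel vanishes. This is the precise sense in which the blow-up creates no homology outside the symmetric sector.

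The second step treats the symmetric summand. By \Cref{lem:symmetric-reduction}, \(\Phi\) conjugates \(\bDelta_k|_{\bT_k}\) to the weighted Laplacian \(\Delta_{\wtilde,k}\) on \(\Gamma\). Therefore
\[
  \Phi\,\ker\bigl(\bDelta_k|_{\bT_k}\bigr)=\ker\Delta_{\wtilde,k}.
\]
Weighted Hodge theory identifies this kernel with \(Z_k(\Gamma)\cap B_k(\Gamma)^{\perp_{\wtilde}}\). As stated in \Cref{sec:preliminaries}, it is isomorphic to \(H^k(\Gamma;\R)\cong H_k(\Gamma;\R)\), because positive weights change only the inner product and not the algebraic (co)boundary. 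Combining the two steps gives \(H_k(\bGamma;\R)\cong\ker\bDelta_k\cong\ker\Delta_{\wtilde,k}\cong H_k(\Gamma;\R)\).

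The main point needing care is that the isomorphism be canonical, not merely an equality of dimensions. To check this, we use that \(\Phi\) intertwines the coboundaries: \(\Phi\bd\Phi^{-1}=d\) on \(\bT_\bullet\). Consequently, a harmonic symmetric chain \(\psi\) satisfies \(\bd\psi=0\) and \(\bpartial\psi=0\), and \(\Phi\psi\) is then a weighted harmonic representative on \(\Gamma\). So the identification sends harmonic representatives to harmonic representatives, as required by the phrase ``after identifying the symmetric sector with the weighted base complex.'' For \(k=0\) no extra argument is needed, because \Cref{thm:asymmetric-decoupling} already covers the ordinary Laplacian in degree \(0\).
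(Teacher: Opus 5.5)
Your proposal is correct and follows the paper's own argument. Both proofs identify homology with $\ker\bDelta_k$, use \Cref{thm:asymmetric-decoupling} to exclude zero modes on $\bT_k^\perp$, and transport the symmetric kernel to $\ker\Delta_{\wtilde,k}\cong H_k(\Gamma;\R)$ via $\Phi$. One small slip: under the paper's coboundary-first convention, $\ker\Delta_{\wtilde,k}=\ker d_k\cap(\im d_{k-1})^{\perp_{\wtilde}}$, not $Z_k(\Gamma)\cap B_k(\Gamma)^{\perp_{\wtilde}}$; both are isomorphic to $H_k(\Gamma;\R)$, so your conclusion is unaffected.
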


\begin{proof}
By Hodge theory, homology is the kernel of the Hodge Laplacian.
\Cref{thm:asymmetric-decoupling} shows that \(\bT_\bullet^\perp\) has no zero modes. All harmonic chains therefore
lie in \(\bT_\bullet\), where \Cref{lem:symmetric-reduction} identifies them
with harmonic chains of the weighted base complex. Changing positive weights
does not change the underlying homology groups.
\end{proof}

\section{Expanded Gadget Complex}
\label{sec:expanded-gadget-complex}

This section applies the general blow-up formalism of
\Cref{sec:blowup-reduction} to the gadget complex used in the hardness
reduction. We first recall the weighted base complex, then build its unweighted
expanded qubit register and attach the expanded local gadgets term by term.
The goal is to produce the unweighted clique complex
\(\bGammaH=\Cl(\bGH)\) whose symmetric sector reproduces the weighted gadget
complex, while the nonsymmetric directions are separated by the decoupling
bound proved above. The spectral estimates for this final complex are carried
out in \Cref{sec:energy-analysis}.

\subsection{Base complex}

Let \(\GH\) be the base gadget graph of \cite{king2024gapped} recalled in
\Cref{sec:preliminaries}, and let
\[
  \GammaH=\Cl(\GH)=\Cl(\mathcal G_n)\cup_i\mathcal T_i.
\]
For distinct local terms \(i\ne j\), the interiors
\(\mathcal T_i\setminus \Cl(\mathcal G_n)\) and
\(\mathcal T_j\setminus \Cl(\mathcal G_n)\) are disjoint and have no edges
between them. Thus the union introduces no unintended mixed cliques. The
selectivity part of the construction is weight-free and therefore applies
before any spectral estimate is invoked.

\subsection{Qubit register}
\label{sec:qubit-register}

We first isolate the qubit-register part, since it is the stable harmonic
subspace on which the local projectors act. Register vertices have level zero,
so their expanded blocks have the top size \(M\). Let \(K_a\) denote the copy
of \(K_M\) replacing the \(a\)th bowtie vertex. In the one-qubit register there
are seven such blocks
\(K_1,\ldots,K_7\): the two squares of the bowtie share the central block
\(K_4\). Write
\[
  \hregvertex{a}=\frac{1}{\sqrt{M}}\sum_{v\in V(K_a)}\ket{v}
\]
for the normalized uniform \(0\)-chain on its vertices. The expanded
one-qubit bowtie has two register subcomplexes
\[
  \bR_0(M)=(K_1\sqcup K_3)*(K_2\sqcup K_4),
  \qquad
  \bR_1(M)=(K_4\sqcup K_6)*(K_5\sqcup K_7),
\]
and
\[
  \bGQ(M)=\bR_0(M)\cup \bR_1(M).
\]
The two expanded encoded one-qubit cycles are
\[
  \ket{\hregcycle{0}}
  =
  \frac12(\hregvertex{1}-\hregvertex{3})*
  (\hregvertex{2}-\hregvertex{4}),
  \qquad
  \ket{\hregcycle{1}}
  =
  \frac12(\hregvertex{4}-\hregvertex{6})*
  (\hregvertex{5}-\hregvertex{7}).
\]
For \(n\) qubits, define the expanded qubit-register graph by the join product
\[
  \bGcal_n(M)=\bGQ(M)^{*n}.
\]
Its target Hodge degree is \(2n-1\), matching the join of \(n\) one-dimensional
cycle representatives. The one-qubit expanded bowtie and the \(n\)-qubit join
register are shown schematically in \Cref{fig:expanded-qubit-gadget}.

\begin{figure}[t]
  \centering
  \includegraphics[width=.86\textwidth]{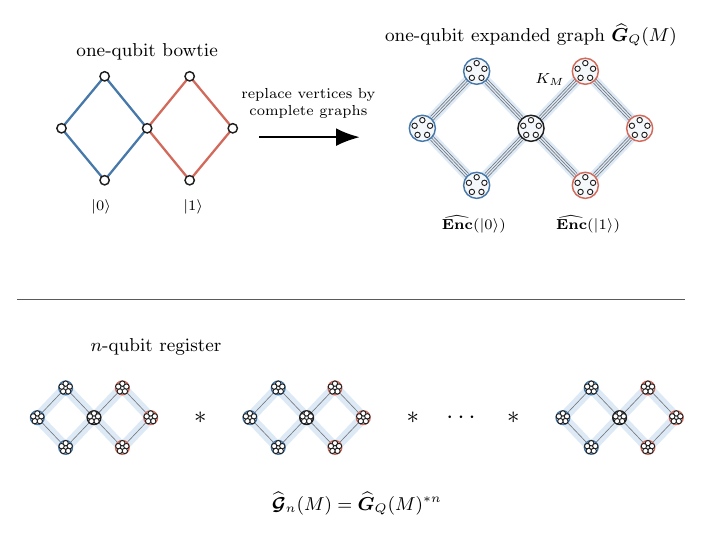}
  \caption{The qubit-register blow-up. A one-qubit bowtie has two harmonic
  cycles representing \(|0\rangle\) and \(|1\rangle\). In the expanded register,
  each vertex is replaced by a complete graph and each bowtie edge by the
  corresponding complete bipartite connection. The \(n\)-qubit register is the
  join product of these one-qubit expanded graphs. Each register block has size
  \(M=\rho\); the displayed copy counts are schematic.}
  \label{fig:expanded-qubit-gadget}
\end{figure}

\begin{definition}[Register encoding maps]
\label{def:register-encoding}
Fix once and for all an orientation convention for joins. For the unexpanded
\KK{} register, label the seven bowtie vertices \(v_1,\ldots,v_7\) and set
\[
  \ket{\regcycle{0}}
  =
  \frac12\bigl(
    \ket{v_1v_2}+\ket{v_2v_3}
    +\ket{v_3v_4}+\ket{v_4v_1}
  \bigr),
\]
\[
  \ket{\regcycle{1}}
  =
  \frac12\bigl(
    \ket{v_4v_5}+\ket{v_5v_6}
    +\ket{v_6v_7}+\ket{v_7v_4}
  \bigr),
\]
where the oriented edges are ordered according to the chosen cycle
orientation. The unexpanded register encoding is the linear map
\[
  \Enc:(\C^2)^{\otimes n}\longrightarrow C_{2n-1}(\Cl(\mathcal G_n)),
  \qquad
  \Enc\ket{x_1\cdots x_n}
  =
  \ket{\regcycle{x_1}}*\cdots *\ket{\regcycle{x_n}}.
\]
For the expanded register, replace each representative vertex by the uniform
block vector \(\hregvertex{a}\) and define
\[
  \hEnc\ket{x_1\cdots x_n}
  =
  \ket{\hregcycle{x_1}}*\cdots *\ket{\hregcycle{x_n}}
  \in C_{2n-1}(\Cl(\bGcal_n(M))).
\]
Equivalently, under the standard chain identification for joins, the displayed
join product is the tensor product
\(\ket{\hregcycle{x_1}}\otimes\cdots\otimes\ket{\hregcycle{x_n}}\), up to the
fixed global sign convention. The expanded encoding has the useful property that
\(\hEnc\ket{x}\) is the fully symmetric average over all copy
choices in the blocks, while \(\Enc\ket{x}\) is the corresponding
minimal representative before expansion.
\end{definition}

This distinction anticipates the symmetric-sector reduction. The blow-up does
not change which logical cycle is being represented; it replaces a selected
representative chain by its block-symmetric harmonic representative. Thus the
logical Hilbert space is carried into the symmetric sector of the expanded
chain space.

\begin{lemma}[Expanded qubit register]
\label{lem:expanded-qubit-register}
There are universal constants \(c,C>0\) such that, on
\(\Cl(\bGcal_n(M))\),
\[
  \ker \bDelta_{2n-1}
  =
  \im(\hEnc),
\]
and, on the orthogonal complement of this kernel in degree \(2n-1\),
\[
  \bDelta_{2n-1}\succeq c\,M I,
  \qquad
  \|\bDelta_{2n-1}\|\le C\,nM.
\]
\end{lemma}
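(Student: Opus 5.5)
The plan is to reduce to a single expanded bowtie with the join formula of \Cref{lem:horak-jost}, and then analyze that single qubit using the two blow-up results of \Cref{sec:blowup-reduction}. Since \(\Cl(\bGcal_n(M))=\Cl(\bGQ(M))^{*n}\), and since \(\DeltaAug=\Delta\) in the positive target degree \(2n-1\), I will write
\[
  C_{2n-1}\bigl(\Cl(\bGcal_n(M))\bigr)
  =
  \bigoplus_{i_1+\cdots+i_n=n-1} C_{i_1}\otimes\cdots\otimes C_{i_n},
\]
where \(C_i=C_i(\Cl(\bGQ(M)))\) and each \(i_a\ge -1\). On each summand the Laplacian is the tensor sum of the single-qubit augmented Laplacians \(\DeltaAug_{i_a}\). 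It then suffices to prove three one-qubit facts, with constants independent of \(M\):
\begin{enumerate}[label=(\roman*)]
\item \(\ker\DeltaAug_1=\operatorname{span}\{\hregcycle{0},\hregcycle{1}\}\), with \(\DeltaAug_1\succeq c M\) on the complement;
\item \(\DeltaAug_i\succeq cM\) for every \(i\ne1\);
\item \(\|\DeltaAug_i\|\le CM\) for all \(i\).
\end{enumerate}

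Granting (i)--(iii), the tensor-sum structure finishes the argument. The tensor sum is zero exactly on tensor products of factor kernels. By (ii), such a product must have all \(i_a=1\), and then the constraint \(\sum_a i_a=n-1\) is indeed met. The kernel is therefore \(\bigotimes_a\operatorname{span}\{\hregcycle{0},\hregcycle{1}\}=\im(\hEnc)\), up to the fixed join sign convention. Any vector orthogonal to the kernel has, in every simultaneous eigenbasis, at least one factor with eigenvalue \(\ge cM\). The other factors are positive semidefinite, so \(\bDelta_{2n-1}\succeq cM\) there. The norm is at most the sum of \(n\) factor norms, hence at most \(CnM\).

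For the one-qubit facts, I view \(\bGQ(M)\) as the level-ratio blow-up of the bowtie \(\GQ\) with every vertex at level \(0\), so that \(\bfmult_v=M\) for all seven vertices. I split each chain space into the symmetric sector \(\bT_i\) and its complement \(\bT_i^\perp\).

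On \(\bT_i^\perp\), \Cref{thm:asymmetric-decoupling} gives \(\bDelta_i\succeq M\). That proof already handled the augmentation term: on \(\bT^\perp_0\) the term \(d_{-1}(d_{-1})^\dagger\) vanishes, and in degree \(-1\) the space \(C_{-1}\) lies entirely in the symmetric sector. So \(\DeltaAug_i\succeq M\) on \(\bT_i^\perp\).

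On \(\bT_i\), \Cref{lem:symmetric-reduction} together with the rescaling identity gives \(\Phi\bDelta_i|_{\bT_i}\Phi^{-1}=M\Delta_i(\Cl(\GQ))\), since all levels are \(0\). I will check that the same identification holds for the augmented operators. This works because \(d_{-1}\ket{\varnothing}=\sum_a\sqrt M\,\hregvertex{a}\), which is exactly the weighted augmentation with weight \(\sqrt M\). The fixed graph \(\GQ\) has no triangles, so \(\bT_i=0\) for \(i\ge2\). It is also connected, so \(\DeltaAug_0\) and \(\DeltaAug_{-1}=7\) are positive definite on the symmetric sector. Its first homology is spanned by the two squares. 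The chains \(\regcycle{0},\regcycle{1}\) are harmonic: they are cycles and there is no up-Laplacian. Under \(\Phi^{-1}\) they map to \(\hregcycle{0},\hregcycle{1}\). Since \(\Cl(\GQ)\) is a fixed complex, its smallest nonzero augmented eigenvalue \(c_0\) and its largest eigenvalue are absolute constants. This gives (i) and (ii) with \(c=\min\{1,c_0\}\).

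The step I expect to need the most care is the norm bound (iii) uniformly in the factor degree \(i\). The blow-up contains cliques of size up to \(2M\), so degrees up to \(2M-1\) occur. The symmetric/asymmetric split only gives lower bounds, so the upper bound needs a separate argument. My first attempt will be the standard bound that every eigenvalue of the (augmented) combinatorial Laplacian of a simplicial complex on \(N\) vertices is at most \(N\). Here \(N=7M\), so \(\|\DeltaAug_i\|\le 7M\) in every degree. If I prefer to avoid citing this, a fallback is a direct bound in the simplex basis. Each \(i\)-simplex has at most \(7M\) cofaces, so \(\|\bpartial_{i+1}\bd_i\|\le 7M\) via the diagonal plus the off-diagonal row sums. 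The down term is handled symmetrically, although that route gives degree-dependent factors that would need a tighter count. Either way this yields \(C=7\), up to the constant from \(\GQ\).
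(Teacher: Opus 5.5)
Your proposal is correct and follows the paper's proof essentially step for step: reduce via the augmented join formula to one expanded bowtie, use \Cref{lem:symmetric-reduction} on the symmetric sector (which gives $M$ times the augmented Laplacian of the fixed bowtie complex) and \Cref{thm:asymmetric-decoupling} on its complement, and bound the norm by a vertex count; your bound $\|\DeltaAug_i\|\le 7M$ is valid and slightly sharper than the paper's crude $14M$. One bookkeeping slip needs fixing: a $(2n-1)$-simplex of the $n$-fold join has $\sum_a(i_a+1)=2n$ vertices, so the summands satisfy $i_1+\cdots+i_n=n$, not $n-1$, and with your constraint as written the all-$i_a=1$ summand would not lie in the target degree, so your ``indeed met'' check is false.
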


\begin{proof}
Let \(\mathcal Q=\Cl(\GQ)\) be the unexpanded seven-vertex bowtie complex.
It is connected, its reduced first homology is two-dimensional, generated by
\(\ket{\regcycle{0}}\) and \(\ket{\regcycle{1}}\), and all its other reduced
homology groups vanish. Since \(\mathcal Q\) is fixed, its
\(\DeltaAug\)-operators have a smallest positive eigenvalue
\(\gamma_{\mathcal Q}>0\), independent of \(M\).

The complex \(\Cl(\bGQ(M))\) is exactly the blow-up of \(\mathcal Q\) with
\(\bfmult_v=M\) for all seven base vertices. On the symmetric sector,
\Cref{lem:symmetric-reduction} therefore gives
\[
  \widehat{\bm{\Delta}}^{\mathrm{aug}}\big|_{\bT}
  \cong M\DeltaAug_{\mathcal Q}.
\]
On its orthogonal complement, \Cref{thm:asymmetric-decoupling} gives
\(\widehat{\bm{\Delta}}\succeq M I\) in every nonnegative degree. This also
holds for the augmented operator: in positive degrees the two operators agree,
in degree \(0\) the augmented operator only adds a positive semidefinite
rank-one term, and in degree \(-1\) there is no asymmetric subspace. Hence the
only
augmented harmonic space of the expanded one-qubit register is in degree
\(1\), where it is generated by
\(\ket{\hregcycle{0}}\) and \(\ket{\hregcycle{1}}\), and every nonzero
one-qubit-register eigenvalue is at least
\(\min\{\gamma_{\mathcal Q},1\}M\).

The one-qubit operator norm has the same linear scale. We use only the crude
bound that, for an unweighted clique complex, the Hodge-Laplacian norm is at
most twice the number of vertices. Since the expanded one-qubit register has
\(|V(\bGQ(M))|=7M\) vertices, its Hodge operators have norm at most \(14M\).

Now apply the augmented join formula
\Cref{lem:horak-jost} to
\(\bGcal_n(M)=\bGQ(M)^{*n}\). The Laplacian on every join summand is a tensor
sum of the \(n\) one-qubit augmented Laplacians. Because the one-qubit
augmented kernel
occurs only in degree \(1\), the degree-\((2n-1)\) join kernel is precisely the
tensor product of \(n\) copies of that two-dimensional space, namely
\(\im(\hEnc)\). Any vector orthogonal to this kernel has a positive-energy
factor and hence energy at least
\(cM\), where \(c=\min\{\gamma_{\mathcal Q},1\}\). The tensor-sum norm is at
most \(14nM\), so the asserted upper bound holds with \(C=14\).
Since \(2n-1>0\), the augmented operator in the target degree is the ordinary
Hodge Laplacian stated in the lemma.
\end{proof}

\subsection{Filling gadgets}
\label{sec:filling-gadgets}

The harmonic space of the qubit-register graph is identified with the logical
\(n\)-qubit Hilbert space by choosing cycle representatives for computational
basis states. Each local term then specifies an integer cycle on this graph,
and the corresponding gadget is attached to fill exactly that cycle. In the
resulting complex, the sets of simplices
\(\mathcal T_i\setminus \Cl(\mathcal G_n)\) are pairwise disjoint, and the
gadgets meet one another only through the qubit-register graph. This is the
structural reason that the later many-gadget estimate can treat the local
projector gadgets separately before controlling their shared down-Laplacian
contribution.

\paragraph{Binary vertex weights.}
The \KK{} construction assigns weight \(1\) to all vertices of the original
qubit graph and weight \(\lambda\) to every added gadget vertex
\cite[Sec.~8]{king2024gapped}. In the explicit single-gadget construction, the
outer layer identified with the qubit graph retains weight \(1\), while the
inside layer and central vertex receive weight \(\lambda\)
\cite[Sec.~8.2, Steps~5--6]{king2024gapped}. Thus the vertices of the
weighted base graph \(\GH\) split as
\[
  V(\GH)=V_{\mathrm{reg}}\sqcup V_{\mathrm{gad}},
\]
where every register vertex has amplitude weight \(1\) and every added gadget
vertex has amplitude weight \(\lambda\). Equivalently, the binary level
function
\[
  \level(v)=
  \begin{cases}
    0,&v\in V_{\mathrm{reg}},\\
    1,&v\in V_{\mathrm{gad}}
  \end{cases}
\]
satisfies \(w(v)=\lambda^{\level(v)}\) exactly and has maximum level \(L=1\).
Consequently, a simplex \(\sigma\) has weight
\[
  w(\sigma)=\lambda^{|\sigma\cap V_{\mathrm{gad}}|}.
\]

\begin{definition}[Binary expansion correspondence]
\label{def:expansion-correspondence}
Let \(\GammaH\) be the weighted base complex with the binary level
function above. Choose an integer ratio
\(\ratio\ge2\), set \(\lambda^2=1/\ratio\), and take the top block size
\(M=\ratio\). Since \(L=1\), \(M\) is an integer multiple of \(\ratio^L=\ratio\).
Thus register vertices have block size
\(\ratio\), while gadget vertices have block size \(1\). The expanded unweighted
gadget is obtained by the following correspondence:
\[
\begin{array}{c|c}
\text{weighted base object} & \text{expanded unweighted object}\\
\hline
\text{register vertex }v & \text{block }K_{\ratio}\\
\text{gadget vertex }v & \text{singleton }K_1\\
\text{edge }uv & \text{complete bipartite graph }K_{\bfmult_u,\bfmult_v}\\
\text{simplex }\sigma & \text{complete multipartite family over its blocks}\\
\text{amplitude weight }w(v) & \sqrt{\bfmult_v/\ratio}\in\{1,\lambda\}\\
\text{gadget attachment to }v & \text{attachment to the full block }K_{\bfmult_v}
\end{array}
\]
\end{definition}

The last line records the convention used throughout the blow-up. An attachment
to a base vertex is replaced by attachments to all copies in the corresponding
block.

The graph construction is exactly the level-ratio blow-up of
\Cref{def:level-ratio-blowup}; hence the output is a clique complex by
definition. The symmetric-sector identification is \Cref{lem:symmetric-reduction}
applied to the levelled base complex \(\GammaH\). In the present binary case
\(\bfmult_v=M\ratio^{-\level(v)}\) with \(M=\ratio\), so the symmetric-sector
weight from \Cref{lem:symmetric-reduction} is
\[
  \sqrt{\bfmult_v}
  =
  \sqrt M\,\lambda^{\level(v)},
  \qquad
  \lambda^2=1/\ratio.
\]
The factor \(\lambda^{\level(v)}\) is exactly the source vertex weight, and
\Cref{lem:symmetric-reduction} therefore gives the common Laplacian rescaling
by \(M\) on the fully symmetric sector.

In this binary application, nonsymmetric directions can only come from register
blocks, since gadget vertices are not blown up. Hence the decoupling bound is
at least \(\ratio I\) on the nonsymmetric sector. We will only use the weaker
bound \(\bDelta\succeq I\).

\begin{figure}[t]
  \centering
  \includegraphics[width=.66\textwidth]{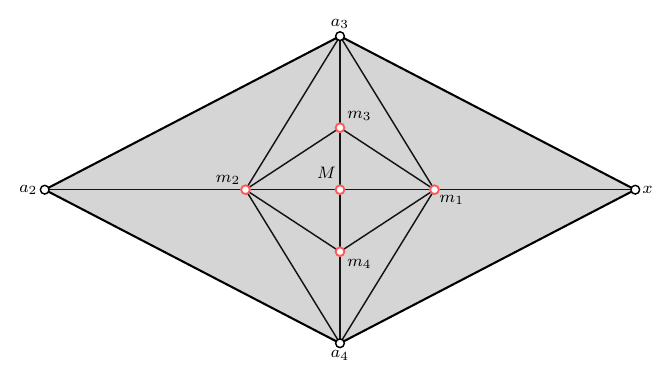}
  \caption{A schematic single-projector filling gadget. The boundary cycle is
  filled by an interior membrane, so the selected encoded cycle becomes
  null-homologous while the orthogonal cycles are preserved by the selectivity
  argument. The drawing suppresses the cut-and-reglue triangulation and
  clique-closure data used in that argument.}
  \label{fig:single-projector-gadget}
\end{figure}

\begin{lemma}[Weight-free filling and selectivity]
\label{lem:weight-free-selectivity}
For every projector \(\phi_i\) from the fixed finite family of
\(\Gtwo\)-integer local projectors recalled in
\Cref{sec:complexity-conventions}, the attached gadget makes the intended
encoded cycle null-homologous. The surviving target-degree register quotient
is exactly \(\ker\phi_i\); in particular, the attachment creates no additional
target-degree class.
\end{lemma}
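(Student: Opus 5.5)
The plan is to reduce the statement to the corresponding weighted \KK{} statement and use only the fact that homology does not depend on vertex weights. There are two complexes to handle: the weighted base complex \(\Cl(\mathcal G_n)\cup\mathcal T_i\) and its unweighted blow-up. The claim is purely topological, since it concerns which target-degree classes survive. So we may work with the algebraic chain complexes and coboundaries, which are weight-independent by the convention of \Cref{sec:preliminaries}, and ignore the Laplacian spectrum.

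First, on the base complex we import the selectivity argument of \cite{king2024gapped} in its topological form. The gadget \(\mathcal T_i\) is built from the integer cycle \(z_i=\sum_x (\phi_i)_x\Enc\ket{x}\), or rather from its integer-coefficient representative supported on the \(\mloc\le 4\) register qubits it touches. It is built by thickening, coning, and identifying the outer layer with the register, so that \(z_i\) bounds a chain in \(\mathcal T_i\). We compute \(H_{2n-1}(\Cl(\mathcal G_n)\cup\mathcal T_i)\) by Mayer--Vietoris for the pair \((\Cl(\mathcal G_n),\mathcal T_i)\). The intersection is the register subcomplex on which the gadget is glued. The gadget's interior is built so that \(\mathcal T_i\) has the homology of the attached sphere with one cell filled, and the connecting map then kills exactly the span of \(z_i\). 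Since \(\phi_i\) is rank one with range spanned by the integer state proportional to \(z_i\), the surviving quotient of \(H_{2n-1}(\Cl(\mathcal G_n))\cong(\C^2)^{\otimes n}\) is \(\im(\Enc)/\C z_i\). Identified orthogonally, this is \(\Enc(\ker\phi_i)\). Selectivity, meaning that no class orthogonal to \(z_i\) is killed and no new class appears, is exactly the injectivity/exactness statement of \cite[Sec.~8]{king2024gapped}. Because it is a statement about integer chain complexes, it holds verbatim with every weight set to \(1\).

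Second, we transfer this to the expanded complex. The attachment convention of \Cref{def:expansion-correspondence} makes \(\Cl(\bGcal_n)\cup\widehat{\mathcal T}_i\) exactly the level-ratio blow-up of \(\Cl(\mathcal G_n)\cup\mathcal T_i\). \Cref{cor:no-spurious-homology} therefore gives an isomorphism of target-degree homology, and this isomorphism is realized by \(\Phi\) on the symmetric sector. By \Cref{lem:expanded-qubit-register} and \Cref{lem:symmetric-reduction}, \(\Phi\) sends \(\hEnc\ket{x}\) to a positive multiple of \(\Enc\ket{x}\). Hence the expanded encoded cycle \(\hEnc(\text{range }\phi_i)\) is null-homologous, the surviving register quotient is \(\hEnc(\ker\phi_i)\), and any additional class would pull back to an additional base class, which is impossible.

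The main obstacle is the first step: extracting from \cite{king2024gapped} a clean weight-free statement that the gadget kills \emph{exactly} \(\C z_i\) and creates no new \((2n-1)\)-class. The \KK{} write-up interleaves this with spectral estimates, and the gadget is attached along a register subcomplex that is not itself a join of the cycle supports. To handle this, I would first try a cut-and-reglue description. Cut the register complex along the support of \(z_i\), verify that the gadget is contractible relative to a sphere representing \(z_i\) using the cone structure, and then apply Mayer--Vietoris degree by degree. The degree \(2n\) term of the intersection has to be checked separately to confirm that the connecting map has rank exactly one.
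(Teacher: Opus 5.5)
Your transfer to the blow-up in the second step is sound. It is how the paper uses the lemma in the YES case, via \Cref{cor:no-spurious-homology}. One point to add: \(\Phi\) carries the blow-up boundary to the weighted adjoint \(\partial_{\wtilde}\), which is a diagonal rescaling of the algebraic boundary; this is harmless because all target-degree register simplices have equal weight. The gap is in the first step. You work on the full register and conclude that the surviving quotient of \(H_{2n-1}(\Cl(\mathcal G_n))\) is \(\im(\Enc)/\C z_i\), with a connecting map of rank one. But \(z_i\) is a \((2m_i-1)\)-cycle on the \(m_i\) touched qubits. On \(n\) qubits the relevant projector is \(\phi_i\otimes I\), whose range \(z_i\otimes(\C^2)^{\otimes(n-m_i)}\) has dimension \(2^{n-m_i}\). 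The survivor must therefore have dimension \((2^{m_i}-1)2^{n-m_i}\), as in \Cref{lem:single-gadget-spectrum}(i). Worse, a gadget glued only along the \(m_i\)-qubit subregister kills nothing in degree \(2n-1\) once \(n>m_i\). The register has no \(2n\)-simplices and the gadget's simplices avoid the unused qubits, whereas every target-degree register simplex uses vertices of all \(n\) qubits. The missing ingredient is the padding \(\Gamma_i^{\mathrm{loc}}*\Cl(\mathcal G_{n-m_i})\). Compute the quotient locally in degree \(2m_i-1\), then use the join formula to tensor that local quotient with the identity; here \(\Cl(\mathcal G_{n-m_i})\) carries reduced homology only in degree \(2(n-m_i)-1\). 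This is the step the paper records as ``padding by unused qubits tensors this quotient with the identity.''

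Second, the selectivity half (only the projected class dies and nothing new is born) is not a formal consequence of ``filling a sphere.'' The paper does not draw it from \KK{}'s general construction, which it cites only for the filling. It relies instead on Rudolph's explicit verification for the fixed finite \(\Gtwo\) family \cite[App.~D.1]{rudolph2025universala}. The reason is that the cut-and-reglued representative \(\widetilde O\) of an integer combination of several encoded spheres is a pseudomanifold that need not be a sphere, or even connected. Up to homotopy (the thickening is a collar), the local complex is \(\Cl(\mathcal G_{m_i})\) with the cone on \(\widetilde O\) attached along \(f:\widetilde O\to\Cl(\mathcal G_{m_i})\). The exact sequence of the pair makes its degree-\((2m_i-1)\) homology the sum of \(\operatorname{coker}\bigl(f_*:H_{2m_i-1}(\widetilde O)\to H_{2m_i-1}(\Cl(\mathcal G_{m_i}))\bigr)\) and \(\ker\bigl(f_*:\widetilde H_{2m_i-2}(\widetilde O)\to\widetilde H_{2m_i-2}(\Cl(\mathcal G_{m_i}))\bigr)\). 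The register has no reduced homology in degree \(2m_i-2\), so ``no new class'' means \(\widetilde H_{2m_i-2}(\widetilde O;\R)=0\), and ``only \(z_i\) dies'' means the top homology of \(\widetilde O\) maps onto \(\R z_i\). For instance, suppose \(\psi_0+\psi_1\) were cut at the shared bowtie vertex into two circles and coned from one apex. Then both squares would die and a new \(1\)-cycle through the apex would appear. These are properties of the specific states that have to be checked. Your proposed check of the ``degree-\(2n\) term of the intersection'' is also in the wrong place. Killed classes are governed by degree \(2n-1\) of the attaching data and new classes by degree \(2n-2\) (locally \(2m_i-1\) and \(2m_i-2\)).
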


\begin{proof}
\KK{} Lemma~8.4 shows that the cut-and-reglue, thickening, and coning
construction attaches a clique complex in which the designated cycle is a
boundary~\cite[Lem.~8.4]{king2024gapped}. For the concrete local states used
here, Rudolph verifies that the projected state becomes a boundary, and that the
only remaining target-degree classes are represented by its orthogonal
complement.
\cite[App.~D.1]{rudolph2025universala}. Tensor-product local states are handled
by joins, and padding by unused qubits tensors this quotient with the identity.
Thus the surviving encoded register space is precisely \(\ker\phi_i\). These
are combinatorial-topological statements; no vertex-weight estimate enters.
\end{proof}

The same weight-free statement also gives the YES-case direction. If the source
Hamiltonian has a zero-energy state, the corresponding encoded harmonic vector
extends across all satisfied gadgets. Thus low energy is witnessed
topologically, rather than by a cancellation among weighted matrix entries; the
blow-up construction only has to preserve this encoded harmonic sector.

\section{Energy Analysis of the Expanded Gadget Complex}
\label{sec:energy-analysis}

The NO-case estimate has two logically separate layers. First, a single
projector gadget must have the intended kernel and an isolated penalty band in
the projected-state direction. Second, after all gadgets are attached, these
spectral subspaces must survive the fact that the gadgets share the same qubit
register and therefore share part of the down-Laplacian.
The role of the blow-up is to identify the symmetric sector with the
corresponding weighted source complex, while
\Cref{thm:asymmetric-decoupling} keeps its orthogonal complement above the
relevant energy scale.

\subsection{Single-gadget energy}

Let \(m_i=|\operatorname{supp}(\phi_i)|\le\mloc\) be the number of logical
qubits on which the \(i\)th local projector acts. Let
\(\Gamma_i^{\mathrm{loc}}\) be the weighted single-gadget complex on those
\(m_i\) qubits from the cited construction, and define its identity padding by
\[
  \Gamma_i^{\mathrm{pad}}
  =
  \Gamma_i^{\mathrm{loc}}*\Cl(\mathcal G_{n-m_i}).
\]
Let \(\Delta_i^{\mathrm{pad}}\) be its Hodge
Laplacian in degree \(2n-1\). We identify
\(\mathcal H_r=(\C^2)^{\otimes r}\) with the harmonic \(r\)-qubit register and
write
\[
  \Phi_i=\phi_i\otimes I_{\mathcal H_{n-m_i}},
  \qquad
  \Phi_i^\perp=I_{\mathcal H_n}-\Phi_i
\]
for the ideal projectors on the harmonic register, extended by zero on its
orthogonal complement in the padded chain space. Put
\[
  \eta_i=\lambda^{4m_i+2},
  \qquad
  \eta_{\mathrm{sg}}=\lambda^{4\mloc+2}.
\]
The decomposition below separates the kernel from three positive-energy
ranges. We use two convenient names: the band close to the projected-state
subspace is the penalty band, and the next low-energy band, supported near the
central filling vertex of the gadget, is the bulk band.

\begin{lemma}[Single-gadget spectrum
{\cite[Lems.~9.1 and~10.1]{king2024gapped}}]
\label{lem:single-gadget-spectrum}
There is a constant \(\lambda_0>0\), depending only on the fixed local gadget
family, such that for \(0<\lambda\le\lambda_0\) the padded chain space has an
orthogonal, \(\Delta_i^{\mathrm{pad}}\)-invariant decomposition
\[
  C_{2n-1}(\Gamma_i^{\mathrm{pad}})
  =
  \mathcal K_i\oplus\mathcal F_i\oplus\mathcal B_i\oplus\mathcal A_i
\]
with the following properties.
\begin{enumerate}[label=(\roman*),leftmargin=*]
  \item \(\mathcal K_i=\ker\Delta_i^{\mathrm{pad}}\) has dimension
  \((2^{m_i}-1)2^{n-m_i}\) and is an \(O(\lambda)\)-perturbation of
  \(\im\Phi_i^\perp\).
  \item The penalty band \(\mathcal F_i\) has dimension \(2^{n-m_i}\), is an
  \(O(\lambda)\)-perturbation of \(\im\Phi_i\), and its spectrum is contained
  in
  \[
    [a_{\mathrm{sg}}\eta_i,b_{\mathrm{sg}}\eta_i]
  \]
  for constants \(0<a_{\mathrm{sg}}\le b_{\mathrm{sg}}\) independent of \(i\).
  \item The bulk band \(\mathcal B_i\) is an \(O(\lambda)\)-perturbation of the
  subspace generated by the central-vertex simplices, and its spectrum lies in
  \([a_{\mathrm{bulk}}\lambda^2,b_{\mathrm{bulk}}\lambda^2]\).
  \item The remaining band \(\mathcal A_i\) has spectrum in
  \([a_{\mathrm{high}},b_{\mathrm{high}}]\).
\end{enumerate}
All displayed constants are positive and depend only on the fixed local gadget
family. If \(\widehat\Phi_i^\perp\) and \(\widehat\Phi_i\) denote the orthogonal
projectors onto \(\mathcal K_i\) and \(\mathcal F_i\), respectively, then
\[
  \|\widehat\Phi_i^\perp-\Phi_i^\perp\|=O(\lambda),
  \qquad
  \|\widehat\Phi_i-\Phi_i\|=O(\lambda).
\]
Moreover, since \(m_i\le\mloc\), the penalty-band lower endpoint satisfies
\[
  a_{\mathrm{sg}}\eta_i
  \ge
  a_{\mathrm{sg}}\eta_{\mathrm{sg}}.
\]
\end{lemma}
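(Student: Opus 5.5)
The statement is imported from \KK{}, so the plan is to reduce it to their single-gadget analysis plus the padding-by-join mechanism, rather than redo perturbation theory from scratch. First I would treat the unpadded local complex \(\Gamma_i^{\mathrm{loc}}\) on \(m_i\) qubits. At \(\lambda=0\), meaning the gadget-vertex contributions to the coboundary are switched off, the weighted Laplacian splits block-diagonally. The register part \(\Cl(\mathcal G_{m_i})\) has kernel equal to the encoded register. The simplices containing the central vertex have their own block, and everything else is at energy \(\Theta(1)\). I would then expand \(\Delta_{w}\) in powers of \(\lambda\), using the coboundary-first convention: each incidence that adjoins a gadget vertex carries a factor \(\lambda\). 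Standard Schrieffer--Wolff or Kato perturbation theory then gives three things. The high band \(\mathcal A\) stays in \([a_{\mathrm{high}},b_{\mathrm{high}}]\). The central-vertex sector acquires energy \(\Theta(\lambda^2\)) from one gadget-vertex incidence squared, which gives \(\mathcal B\). The encoded register splits into an exact kernel and a low band.

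For the exact kernel I would use \Cref{lem:weight-free-selectivity}. The filling makes \(\Enc(\im\phi_i)\) a boundary and leaves exactly \(\ker\phi_i\) in homology. Hodge theory then gives \(\dim\ker\Delta^{\mathrm{loc}}=2^{m_i}-1\), and the harmonic representatives are \(O(\lambda)\)-close to \(\Enc(\ker\phi_i)\) by Davis--Kahan, since the unperturbed gap around \(0\) is separated from the bulk scale. The single projected state \(\Enc(\im\phi_i)\) is a boundary but is orthogonal to the kernel, so it has positive energy.

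The main obstacle is to show that this energy scales as exactly \(\lambda^{4m_i+2}\), with constants uniform over the finite gadget family. My first attempt is a variational characterization. The lowest energy on the orthogonal complement of \(\ker\) in the register-near sector equals \(1/\|\partial_w^{+}\|^2\)-type quantities, that is, the inverse squared norm of the minimal weighted filling chain. The filling membrane of the projected cycle consists of top simplices, each containing \(2m_i+1\) gadget vertices from the cut-and-reglue and coning layers. Measured in the weighted metric, its norm scales like \(\lambda^{-(2m_i+1)}\), so the energy scales like \(\lambda^{2(2m_i+1)}=\lambda^{4m_i+2}\). The upper bound \(b_{\mathrm{sg}}\eta_i\) comes from plugging in the explicit filling. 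The lower bound \(a_{\mathrm{sg}}\eta_i\) needs the fact that every filling has at least this many gadget-vertex factors. Finiteness of the family then lets me take uniform constants and a uniform \(\lambda_0\). This is precisely the content of \KK{} Lemmas 9.1 and 10.1, which I would cite for this step.

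Finally, I would pass to \(\Gamma_i^{\mathrm{pad}}=\Gamma_i^{\mathrm{loc}}*\Cl(\mathcal G_{n-m_i})\) using \Cref{lem:horak-jost}. The augmented Laplacian is the tensor sum \(\Delta^{\mathrm{loc}}\otimes I+I\otimes\Delta^{\mathrm{reg}}\), and the register factor has a kernel equal to \(\mathcal H_{n-m_i}\) in the matching degree with an \(\Omega(1)\) gap elsewhere. Tensoring each local band with \(\mathcal H_{n-m_i}\) gives \(\mathcal K_i,\mathcal F_i,\mathcal B_i\) with the stated dimensions and spectra. Every component with a nonharmonic padding factor goes into \(\mathcal A_i\), and \(b_{\mathrm{high}}\) is adjusted by the bounded norm of the padding Laplacian. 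Because the projector bounds are tensored with the identity, they are unchanged: \(\|\widehat\Phi_i-\Phi_i\|=O(\lambda)\). The last inequality holds because \(\lambda\le1\) and \(m_i\le\mloc\).
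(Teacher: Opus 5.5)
Your route is the paper's. You take (i)--(iv) for the unpadded gadget from King--Kohler Lem.~9.1. You pad with \(\Cl(\mathcal G_{n-m_i})\) through the augmented join formula, so the harmonic padding factor tensors each local band with \(\mathcal H_{n-m_i}\) and every summand with a nonharmonic padding factor goes into \(\mathcal A_i\). You carry the projector bounds over unchanged, take uniform constants from the finite gadget family, and get the last inequality from \(\lambda<1\) and \(m_i\le\mloc\). Deriving \(\dim\mathcal K_i\) from \Cref{lem:weight-free-selectivity} plus Hodge theory is a fine alternative to importing it.

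The parts where you go beyond the citation contain errors, and you should fix them even though none is load-bearing.

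\textbf{Filling heuristic.} Your filling heuristic for \(\eta_i\) has its two directions swapped. On boundaries \(\Delta_w\) acts as \(\partial_w\partial_w^{\dagger}\). Its smallest eigenvalue is \(1/\max_{\|b\|=1}\|\partial_w^{+}b\|^2\), where \(\partial_w^{+}b\) is the minimal weighted filling.
\begin{itemize}
\item Showing that \emph{every} filling of the projected cycle has norm \(\Omega(\lambda^{-(2m_i+1)})\) is what yields the \emph{upper} bound \(b_{\mathrm{sg}}\eta_i\).
\item An explicit filling \(c\) of a unit boundary \(b\) only gives \(\langle b,\Delta_w b\rangle\ge\|c\|^{-2}\), by Cauchy--Schwarz, for that particular \(b\). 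That sits on the lower-bound side. It yields \(a_{\mathrm{sg}}\eta_i\) only once such filling bounds hold for the actual low-energy eigenvectors, not just the bare register cycle.
\end{itemize}

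\textbf{Davis--Kahan step.} A one-shot Davis--Kahan bound cannot place \(\mathcal K_i\) within \(O(\lambda)\) of \(\Enc(\ker\phi_i)\). In your own \(\lambda=0\) picture the central-vertex sector is degenerate with the register. An \(O(\lambda)\) perturbation cannot resolve the \(\lambda^2\) separation from the bulk. You need the two-scale (Schrieffer--Wolff) analysis, or simply import (i) as the paper does.

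\textbf{The constant \(b_{\mathrm{high}}\).} The padding Laplacian's norm is not bounded independently of \(n\). The local kernel tensored with a top eigenvector of the degree-\((2(n-m_i)-1)\) register Laplacian already has energy of order \(n-m_i\). So only \(a_{\mathrm{high}}\) can be made uniform. The paper's sketch glosses over the same point, and \(b_{\mathrm{high}}\) is never used downstream.
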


\begin{proof}[Proof sketch]
For completeness, we recall the \KK{} single-gadget analysis in the form used
here. For the unpadded \(m_i\)-qubit gadget, statements (i)--(iv) are
\cite[Lem.~9.1]{king2024gapped}. The unused register is added by the join
\(\Gamma_i^{\mathrm{loc}}*\Cl(\mathcal G_{n-m_i})\). The augmented join formula
makes its Laplacian a tensor sum; restricting the unused-register factor to
its harmonic space therefore tensors each local low-energy band with
\(\mathcal H_{n-m_i}\), while every nonharmonic unused-register factor has
constant energy. All tensor summands with a nonharmonic unused-register factor
are included in \(\mathcal A_i\), together with the high local band tensored
with \(\mathcal H_{n-m_i}\). The four displayed sectors are therefore
exhaustive. This gives the dimensions and spectral scales above, exactly as in
\cite[Lem.~10.1]{king2024gapped}.

The projector estimates follow from the definition of an
\(O(\lambda)\)-perturbation of subspaces and
\cite[Lem.~7.4]{king2024gapped}. The constants can be chosen uniformly because
Rudolph's source uses a fixed finite gadget family with
\(m_i\le4\)~\cite[Thm.~6.8 and App.~D.1]{rudolph2025universala}. Finally,
\(\eta_i\ge\eta_{\mathrm{sg}}\) follows from \(0<\lambda<1\) and
\(m_i\le\mloc\).
\end{proof}

The many-gadget argument uses these spectral projectors directly. On the
expanded unweighted gadget, \Cref{lem:symmetric-reduction} multiplies all
symmetric-sector bands by \(M\), while \Cref{thm:asymmetric-decoupling} keeps
the orthogonal complement above \(\min_v \bfmult_v\).

\subsection{Many-gadget energy}

Let \(\Delta_{\GammaH,2n-1}\) be the degree-\((2n-1)\) weighted source
Laplacian after all padded gadgets have been attached. The up-Laplacian
contributions from different gadget interiors are additive because the
interiors share no top-dimensional simplices. The down-Laplacian requires a
joint estimate because every gadget meets the same qubit register, so
boundaries of different gadget chains can interfere on the shared register.
This is precisely the interference controlled in the cited many-gadget
argument.

\begin{theorem}[Many-gadget energy bound
{\cite[Thm.~10.1]{king2024gapped}}]
\label{thm:kk-many-gadget}
There are a constant \(\alpha_{\mathrm{wt}}>0\) and a sufficiently large fixed
integer \(C_{\mathrm{wt}}\), depending only on the local gadget family, such
that the following holds. Suppose \(H\) is an \(n\)-qubit source Hamiltonian and
\(H=\sum_{i=1}^t\Phi_i\succeq\pH I_{\mathcal H_n}\), where
\(0<\pH\le1\). Set \(c_{\mathrm{wt}}=C_{\mathrm{wt}}^{-1}\) and
\[
  \lambda=c_{\mathrm{wt}}\pH/t.
\]
Then the weighted combined complex satisfies
\[
  \Delta_{\GammaH,2n-1}
  \succeq
  \alpha_{\mathrm{wt}}\lambda^{4\mloc+2}t^{-1}\pH\,I .
\]
\end{theorem}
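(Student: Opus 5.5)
The plan is to reduce the many-gadget estimate to the single-gadget spectra of \Cref{lem:single-gadget-spectrum}, plus a perturbative version of the statement \(\sum_i\Phi_i\succeq\pH I\) on the harmonic register. Fix a unit vector \(\psi\in C_{2n-1}(\GammaH)\). Since gadget interiors are disjoint and have no edges between them, decompose
\[
  \psi=\psi_{\mathrm{reg}}+\sum_{i=1}^t\psi_i ,
\]
where \(\psi_{\mathrm{reg}}\) is supported on the padded register \(\Cl(\mathcal G_n)\) and \(\psi_i\) on the top-degree simplices meeting the interior of \(\mathcal T_i\). The up-Laplacian is additive over gadgets: every \(2n\)-simplex lies either in the register or in exactly one gadget. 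Hence \(\|d\psi\|^2\) splits into a register term plus gadget terms, each of which involves only \(\psi_{\mathrm{reg}}+\psi_i\). The down-Laplacian \(\|\partial_w\psi\|^2\) does not split, because the boundaries of the \(\psi_i\) can interfere on the shared register.

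First I would split into two cases according to the interior mass \(\mu=\sum_i\|\psi_i\|^2\). If \(\mu\) is not small (say \(\mu\ge\lambda^{c}\) for a suitable constant \(c\)), the bulk and high bands of the individual gadgets should give energy \(\gtrsim\lambda^2\mu\), which already dominates the target \(\lambda^{4\mloc+2}\pH/t\). Making this rigorous needs a lower bound on \(\|\partial_w\psi\|^2\) that is robust to cross terms. I would get it by noting that the cross terms \(\langle\partial_w\psi_i,\partial_w\psi_j\rangle\) for \(i\ne j\) live only on register \((2n-2)\)-simplices. Gadget vertices carry weight \(\lambda\), so these contributions are suppressed by a power of \(\lambda\) relative to the diagonal terms.

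If \(\mu\) is small, \(\psi\) is close to a register chain. Then I compare with the padded operators. Using \(\Delta_{\GammaH}\succeq\frac1t\sum_i\Delta_i^{\mathrm{pad}}\) up to cross-term errors of relative size \(O(\lambda)\), each \(\Delta_i^{\mathrm{pad}}\) gives energy at least \(a_{\mathrm{sg}}\eta_{\mathrm{sg}}\langle\psi,\widehat\Phi_i\psi\rangle\) by \Cref{lem:single-gadget-spectrum}. Summing and replacing \(\widehat\Phi_i\) by \(\Phi_i\) at cost \(\|\widehat\Phi_i-\Phi_i\|=O(\lambda)\) gives
\[
  \langle\psi,\Delta_{\GammaH}\psi\rangle
  \gtrsim
  \frac{\eta_{\mathrm{sg}}}{t}\Bigl(\langle\psi,H\psi\rangle-O(t\lambda)\Bigr)
  \ge
  \frac{\eta_{\mathrm{sg}}}{t}\bigl(\pH-O(t\lambda)\bigr).
\]
The choice \(\lambda=c_{\mathrm{wt}}\pH/t\), with \(C_{\mathrm{wt}}\) large, makes the error at most \(\pH/2\). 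This yields \(\alpha_{\mathrm{wt}}\lambda^{4\mloc+2}\pH/t\).

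The main obstacle is the inequality \(\Delta_{\GammaH}\succeq\frac1t\sum_i\Delta_i^{\mathrm{pad}}-(\text{error})\). Its up part is fine by additivity, since each gadget's up term appears once and \(1\ge 1/t\). The down part requires showing that interference among the \(\partial_w\psi_i\) on the shared register cannot cancel the penalty-band energy. I would first try an operator Cauchy--Schwarz argument with a weighting that exploits the \(\lambda\)-suppression of interior-to-register boundary coefficients. If that turned out to be too lossy, I would fall back on the projector-geometry (Kitaev-type) lemma as used in \cite[Thm.~10.1]{king2024gapped}, sketched in \Cref{app:many-gadget-proof-sketch}.
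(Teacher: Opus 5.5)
\textbf{There is a genuine gap.} The step that fails is the comparison $\Delta_{\GammaH}\succeq\frac1t\sum_i\Delta_i^{\mathrm{pad}}-(\text{error})$ with an error below the penalty scale. A better-weighted Cauchy--Schwarz will not rescue it.

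On register $(2n-2)$-simplices write $\partial_w\psi=a+\sum_i x_i$. Here $a=\partial\psi_{\mathrm{reg}}$, and $x_i$ comes from deleting the single gadget vertex from simplices of $\psi_i$ (coefficient $\lambda$). The combined energy sees $\|a+\sum_i x_i\|^2$, whereas $\langle\psi,\Delta_i^{\mathrm{pad}}\psi\rangle$ sees $\|a+x_i\|^2$. Low-energy chains are exactly where these differ:
\begin{itemize}
\item by \Cref{lem:single-gadget-spectrum}(i)--(ii), kernel and penalty vectors carry $O(\lambda)$ interior corrections, so $\|x_i\|$ is of order $\lambda^2$;
\item the register part must then be corrected jointly, so that $a\approx-\sum_j x_j$.
\end{itemize}
Consequently $\langle\psi,\Delta_i^{\mathrm{pad}}\psi\rangle\ge\|\sum_{j\ne i}x_j\|^2$. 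For the exact harmonic chain of a YES instance, $\langle\psi,\Delta_{\GammaH}\psi\rangle=0$, while this quantity is generically of order $\lambda^4$. That is many orders above the scale $\lambda^{4\mloc+2}\pH/t$ your inequality must resolve.

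The best general bound is $\frac1t\sum_i\|\sum_{j\ne i}x_j\|^2=O(t\lambda^2\mu)$, and it also prevents your case split from closing. The $O(\lambda)$ corrections give interior mass of order $\lambda^2$ per gadget with no bulk- or high-band energy. So ``energy $\gtrsim\lambda^2\mu$'' requires $\mu\gg\lambda^2$. But for $\mu$ that large, the available interference bound $O(t\lambda^2\mu)$ is already $\gg t\lambda^4\gg\lambda^{4\mloc+2}\pH/t$. No threshold $\lambda^c$ makes the two cases meet.

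There is also a separate, smaller gap: $\langle\psi,H\psi\rangle\ge\pH$ only controls the component of $\psi$ in $\mathcal H_n$. Register chains orthogonal to $\mathcal H_n$ have to be handled as well.

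\textbf{The missing idea} is never to compare down-Laplacians at the penalty scale. The paper, following \KK{}, argues by contradiction from a unit $\varphi$ with $\langle\varphi,\Delta_{\GammaH,2n-1}\varphi\rangle<E$.
\begin{itemize}
\item The penalty overlaps $\sum_i\langle\varphi,\widehat\Phi_i\varphi\rangle$ are bounded using only the up-Laplacian. You correctly noted that it is exactly additive over gadgets, and it dominates $\widehat\Phi_i$ at scale $\eta_i$. So the tiny scale enters with no cross terms at all.
\item The shared-register interference is controlled only at the coarse scale $\lambda^2$. The high- and bulk-band overlaps are $O(\lambda^2 t)$, via $O(\lambda)$ leakage and localization of the bulk in pairwise-disjoint regions.
\item These bounds go into the flat decomposition $\Pi_0+\Pi_i=\Pi_i^{(\mathcal A)}+\Pi_i^{(\mathcal B)}+\widehat\Phi_i+\widehat\Phi_i^\perp$, together with $\Pi_0+\sum_i\Pi_i=I$ and $\sum_i\widehat\Phi_i^\perp=t\Pi_{\mathcal H}-H+O(\lambda t)$.
\end{itemize}
The nonpositive term $-(t-1)\langle\varphi,(\Pi_0-\Pi_{\mathcal H})\varphi\rangle$ takes care of the non-harmonic register directions. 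The result is $1\le 1-\pH+O\bigl((c_{\mathrm{wt}}+\alpha_{\mathrm{wt}}+c_{\mathrm{wt}}^2)\pH\bigr)$, a contradiction.

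Your stated fallback is this argument. As written, though, the proposal defers exactly the step that carries the proof.
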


For completeness, a proof sketch of this imported estimate in the present
notation is given in \Cref{app:many-gadget-proof-sketch}.

The single- and many-gadget estimates above describe the weighted low-energy
sector transported to the symmetric sector of the unweighted blow-up. The
final NO-case argument now fixes the multiplicity ratio, top block size, and
output-size comparison explicitly.

\subsection{NO-case gap transfer}

\begin{theorem}[NO-case gap for the unweighted blow-up]
\label{thm:unweighted-no-gap}
Assume the source Hamiltonian satisfies
\(\lambda_{\min}(H)\ge \pH\). The blow-up graph \(\bGH\) produced by the
reduction has associated clique complex \(\bGammaH=\Cl(\bGH)\), and it
satisfies
\[
  \lambda_{\min}(\bDelta_{2n-1}(\bGammaH))
  \ge \pgap(N)=N^{-c_{\mathrm{gap}}}
\]
in the NO case, for the fixed constant \(c_{\mathrm{gap}}\) chosen in
the problem definition.
\end{theorem}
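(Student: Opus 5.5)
The plan is to split the degree-\((2n-1)\) chain space of \(\bGammaH\) into the copy-relabeling invariant sector \(\bT_{2n-1}\) and its orthogonal complement \(\bT_{2n-1}^\perp\), bound \(\bDelta_{2n-1}\) from below on each, and then fix the parameters so that the smaller bound is at least \(N^{-c_{\mathrm{gap}}}\). The averaging projectors \(\widehat{\bm A}_v\) commute with \(\bDelta_{2n-1}\), so both sectors are invariant and no cross terms arise. The global bound is therefore the minimum of the two sector bounds.

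\emph{Complement.} By \Cref{thm:asymmetric-decoupling},
\[
  \bDelta_{2n-1}\big|_{\bT^\perp_{2n-1}}\succeq\min_v\bfmult_v .
\]
With the binary choice of \Cref{def:expansion-correspondence}, gadget vertices have \(\bfmult_v=1\), so this bound is at least \(1\).

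\emph{Symmetric sector.} \Cref{lem:symmetric-reduction}, together with the rescaling \(\Delta_{\wtilde}=M\Delta_{w_\ratio}\), identifies \(\bDelta_{2n-1}|_{\bT_{2n-1}}\) unitarily with \(M\Delta_{\GammaH,2n-1}\), where the weights are \(w(v)=\lambda^{\level(v)}\) and \(\lambda=\ratio^{-1/2}\). I then apply \Cref{thm:kk-many-gadget}. The point that needs care is that the theorem uses the specific value \(\lambda=c_{\mathrm{wt}}\pH/t\), while here \(\lambda^2=1/\ratio\) must come from an integer \(\ratio\).

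I would set \(\ratio=\lceil (t/(c_{\mathrm{wt}}\pH))^2\rceil=\lceil (C_{\mathrm{wt}}\,q(n)\,t)^2\rceil\). This is an integer because \(C_{\mathrm{wt}}\) and \(q(n)\) are integers, and in fact the ceiling is not even needed. With this choice, \(\lambda=\ratio^{-1/2}\) equals \(c_{\mathrm{wt}}\pH/t\) exactly. Since \(\ratio\ge2\), this value also satisfies \(\lambda\le\lambda_0\) once \(C_{\mathrm{wt}}\) is large. The many-gadget bound then gives
\[
  \bDelta_{2n-1}\big|_{\bT_{2n-1}}\succeq M\,\alpha_{\mathrm{wt}}\lambda^{4\mloc+2}t^{-1}\pH ,
\]
with \(M=\ratio\).

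\emph{Size accounting.} Here \(t\le\poly(n)\) and \(\pH=1/q(n)\), so \(\ratio=\poly(n)\). The gadget family is fixed, so \(|V(\GH)|=O(n+t)\). The number of vertices is then \(N\le\ratio\,|V_{\mathrm{reg}}|+|V_{\mathrm{gad}}|=\poly(n)\), and also \(N\ge n\). The symmetric bound equals \(\alpha_{\mathrm{wt}}\ratio^{-(2\mloc+1)+1}t^{-1}\pH=\alpha_{\mathrm{wt}}\ratio^{-2\mloc}t^{-1}q(n)^{-1}\), which is at least \(n^{-c'}\) for some constant \(c'\) determined by \(\mloc\le4\) and the degrees of \(t\) and \(q\). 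Because \(N\ge n\), this is at least \(N^{-c'}\) after absorbing \(\alpha_{\mathrm{wt}}\) for large \(n\). Combining the two sectors gives \(\lambda_{\min}(\bDelta_{2n-1})\ge\min\{N^{-c'},1\}\ge N^{-c_{\mathrm{gap}}}\) for any \(c_{\mathrm{gap}}>c'\); small \(n\) are handled by enlarging \(c_{\mathrm{gap}}\) or by brute force.

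\emph{Main obstacle.} The hardest step is making this gadget-family-dependent exponent uniform. Concretely, the polynomial degree of the instance size \(t\) has to be fixed in advance by the source problem (Rudolph's family gives a fixed polynomial bound on \(t\) and a fixed \(q\)). Only then is \(c_{\mathrm{gap}}\) a genuine constant rather than one that depends on the instance.
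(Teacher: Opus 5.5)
Your proposal is correct and follows essentially the same route as the paper's proof. The shared steps are the invariant split \(\bT_{2n-1}\oplus\bT_{2n-1}^\perp\), the decoupling bound \(\min_v\bfmult_v=1\) on the complement, and the exact factor \(M=\ratio\) that transfers \Cref{thm:kk-many-gadget} to the symmetric sector. The choice \(\ratio=(C_{\mathrm{wt}}tq(n))^2\) makes \(\lambda=c_{\mathrm{wt}}\pH/t\) exact and gives \(E_{\mathrm{sym}}\ge\alpha_{\mathrm{wt}}\ratio^{-2\mloc}t^{-1}\pH\). The closing comparison, via \(N\ge n\) and enlarging \(c_{\mathrm{gap}}\) to cover finitely many small instances, is also how the paper concludes.
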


\begin{proof}
Combine
\Cref{lem:symmetric-reduction,thm:asymmetric-decoupling,thm:kk-many-gadget}.
The symmetric-sector Laplacian is exactly \(M\) times
the weighted source Laplacian at \(\lambda^2=1/\ratio\), including the
cross-gadget down-Laplacian terms. Recall that \(\pH=1/q(n)\), with \(q(n)\)
a positive integer-valued polynomial. Choose the integer ratio
\[
  \ratio
  =
  \bigl(C_{\mathrm{wt}}tq(n)\bigr)^2.
\]
Here \(C_{\mathrm{wt}}\), \(t\), and \(q(n)\) are positive integers, so
\(\ratio\) is an integer (and \(\ratio\ge2\) after fixing
\(C_{\mathrm{wt}}\) sufficiently large).
Then
\[
  \lambda=\ratio^{-1/2}
  =\frac{1}{C_{\mathrm{wt}}tq(n)}
  =c_{\mathrm{wt}}\pH/t,
\]
so the parameter choice in \Cref{thm:kk-many-gadget} holds exactly, with no
rounding of the perturbative scale.

The weighted base complex has only two vertex levels. Set
\[
  M=\ratio,
  \qquad
  \bfmult_v=
  \begin{cases}
    \ratio,&v\in V_{\mathrm{reg}},\\
    1,&v\in V_{\mathrm{gad}}.
  \end{cases}
\]
Since \(L=1\) and \(M=\ratio\), \(M\) is an integer multiple of \(\ratio^L=\ratio\).
Hence every block size is an integer and \(\min_v \bfmult_v=1\).
If \(R=|V(\GammaH)|\) is the number of vertices in the base complex, the output
vertex count is
\[
  N=\sum_{v\in V(\GammaH)}\bfmult_v
  \le R\ratio.
\]
The source reduction has \(R=\poly(n)\), \(t=\poly(n)\), and
\(\pH\ge1/\poly(n)\), while \(\mloc\) is fixed. Hence \(\ratio\), \(M\), and \(N\) are all
polynomially bounded in \(n\).

On the symmetric sector, the weighted many-gadget energy bound and the exact factor \(M\) give
\[
  E_{\mathrm{sym}}
  \ge
  \ratio\alpha_{\mathrm{wt}}\lambda^{4\mloc+2}t^{-1}\pH
  =
  \alpha_{\mathrm{wt}}\ratio^{-2\mloc}t^{-1}\pH.
\]
Thus there are fixed constants \(c_0>0\) and \(a,d>0\) such that
\[
  E_{\mathrm{sym}}\ge c_0n^{-a},
  \qquad
  n\le N\le c_0^{-1}n^d.
\]
Choose a fixed \(c_{\mathrm{gap}}>a\). For all sufficiently large \(n\),
\[
  N^{-c_{\mathrm{gap}}}
  \le n^{-c_{\mathrm{gap}}}
  \le c_0n^{-a}
  \le E_{\mathrm{sym}}.
\]
Increasing the fixed target exponent \(c_{\mathrm{gap}}\), if necessary,
covers the finitely many smaller reduced instances.
The orthogonal complement of the symmetric sector has gap at least
\(\min_v \bfmult_v=1\), so taking the minimum of the two bounds proves the theorem.
\end{proof}

\section{Containment}
\label{sec:containment}

\begin{proposition}[Containment]
\label{thm:containment}
Unweighted Gapped Clique Homology belongs to \(\QMAone^{\Gtwo}\).
\end{proposition}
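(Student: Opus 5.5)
Membership will follow the standard phase-estimation verifier for a gapped Hodge Laplacian, made exact so that perfect completeness survives with the gate set \(\Gtwo\). Given an instance \((G,k)\) on \(N\) vertices, the witness is a state on \(O(k\log N)\) qubits, read as a normalized \(k\)-chain in the simplex basis; ordered tuples of vertex labels are allowed, and a cheap \(\Gtwo\)-computable check rejects non-clique tuples and non-antisymmetric components. The verifier then tests whether the witness lies in \(\ker\Delta_k(\Cl(G))\). In the YES case, Hodge decomposition gives a harmonic representative, since \(\ker\Delta_k\cong H_k\ne0\). In the NO case the promise gives \(\Delta_k\succeq \pgap(N)I\) on the whole chain space.

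The key steps, in order, are as follows.

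First, \(\Delta_k\) is an integer symmetric matrix with polynomial sparsity: row entries are indexed by faces and cofaces of a simplex, so there are at most \(O(kN)\) of them. Its norm is at most \(2N\), and its entries are \(\Gtwo\)-efficiently computable from the adjacency matrix. This is the setting of Rudolph's exact LCU simulation of sparse integer clique Laplacians cited in \Cref{sec:complexity-conventions}.

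Second, we use that LCU construction to obtain an exact block encoding of a normalized operator \(\Delta_k/\Lambda\), or of \(I-2\Delta_k/\Lambda\) with a polynomial integer normalization \(\Lambda\). All coefficients of the LCU are dyadic or integer after padding, so the block encoding is realized exactly in \(\Gtwo\).

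Third, the verifier applies a polynomial of this block encoding, for instance via repeated Hadamard-type tests or a product \(\prod_j (I-\Delta_k/\Lambda)\) built from \(r=\poly(N)\) exact block-encoding calls, and accepts iff all ancillas return to \(\ket0\). On a harmonic chain the block-encoded operator acts as the identity on the flagged subspace, so acceptance probability is exactly \(1\).

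Fourth, in the NO case every unit chain has \(\langle\psi,\Delta_k\psi\rangle\ge \pgap(N)\). A single round of the test therefore rejects with probability at least \(\Omega(\pgap(N)/\Lambda)\), which is inverse polynomial. Repeating the test \(\poly(N)\) times sequentially on the same register, or using \(r\) parallel copies of the witness, then amplifies soundness to a constant. Since the operator is Hermitian and PSD, this is the standard one-sided amplification of a frustration-free style check.

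The main obstacle is the exactness requirement in the second and third steps. Generic sparse-Hamiltonian simulation introduces approximation error, which destroys perfect completeness. Perfect completeness needs every rotation and every normalization constant to be implementable exactly with \(\{\mathsf X,\mathsf{CX},\mathsf{CCX},H\otimes H\}\). I would handle this by reducing to Rudolph's result in three ways: decompose \(\Delta_k\) as a sum of polynomially many \(\pm1\)-signed 1-sparse integer matrices, pad the LCU weights to a power of two using zero terms, and use \(H\otimes H\) to prepare the uniform LCU index state exactly. One also has to check that the real, rather than cyclotomic, integer structure is preserved, which holds because the entries lie in \(\{0,\pm1\}\) off-diagonal and in \([0,N]\) on the diagonal.
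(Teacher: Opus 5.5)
Your second through fourth steps are sound and follow the paper's route. The exact LCU block encoding of the sparse integer Laplacian is Rudolph's 1-sparse decomposition, and post-selected sequential repetition then gives acceptance probability $\|A^r\psi\|^2$. The paper does not build this verifier itself. It normalizes the Laplacian by a power of two and cites Rudolph's containment of Exact Sparse Hamiltonian in $\QMAone^{\Gtwo}$ (his Lem.~6.1, using the exact LCU of Lem.~6.2), which packages the same verifier.

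The gap is in your witness encoding. You let the witness live on ordered tuples of labels and appeal to ``a cheap $\Gtwo$-computable check'' for antisymmetry. But you define $\Delta_k$, count its sparsity, and block-encode it only in the simplex basis. Such a check cannot be an exact projection onto the antisymmetric subspace. Any $\Gtwo$ verifier, even with intermediate measurements and coin flips, has an acceptance operator whose computational-basis entries lie in $\Z[1/2]$. The antisymmetrizer, however, has diagonal entry $1/(k+1)!$ on every tuple of distinct labels, which is not dyadic once $k\ge 2$. So whatever you actually implement, for instance randomly chosen adjacent swap tests, only penalizes non-antisymmetric components with some weight. It does not leave the register in the antisymmetric subspace. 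The subsequent Laplacian test then acts on vectors that are not chains, and you have not said what operator it block-encodes there. To keep this encoding you would need a sparse integer operator on the whole tuple space with three properties: it commutes with the $S_{k+1}$ action, it restricts to $\Delta_k$ on the sign-isotypic component, and after adding a scaled penalty such as $\sum_j (I+\mathrm{SWAP}_{j,j+1})$ it is bounded below by an inverse polynomial on the complement. That is doable, but it is a real argument you have not given.

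The simple repair is the paper's: fix orientations canonically by increasing vertex order. The paper uses $N$-bit indicator strings; sorted label tuples also work. Then validity of a basis string (being a sorted $(k+1)$-clique) is a classical predicate. You can either reject invalid strings by a reversible classical check, which is an exact computational-basis projection, or, as the paper does, extend $\Delta_k$ by the identity on invalid strings. Either way you get a single exactly specified sparse integer Hamiltonian with no spurious zero modes. Your remaining steps, or Rudolph's lemma directly, then go through unchanged.
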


\begin{proof}
Let the input graph \(G\) have \(N\) vertices. The verifier uses one qubit for
each vertex, so the ambient Hilbert space is \((\C^2)^{\otimes N}\).
Encode a simplex by its increasing vertex set. Thus a computational-basis
string \(\ket{S}\) is valid when \(S\subseteq V(G)\) has cardinality \(k+1\)
and induces a clique; the increasing order fixes its orientation. Let
\(\mathcal V_k\) be the span of these valid strings. Extend the
clique Hodge Laplacian to the full computational-basis space by
\[
  \widetilde\Delta_k
  =
  \Delta_k(\Cl(G))\big|_{\mathcal V_k}
  \oplus I_{\mathcal V_k^\perp}.
\]
The invalid strings therefore have energy one and cannot create additional
zero modes.

The matrix entries of \(\widetilde\Delta_k\) are integers arising from
incidence signs and the invalid-string penalty. For a fixed valid basis state
\(\ket{\sigma}\), the possible nonzero entries
\(\bra{\tau}\widetilde\Delta_k\ket{\sigma}\) are obtained by the two incidence
walks
\[
  \sigma
  \xrightarrow{d_k}
  \eta
  \xrightarrow{\partial_{k+1}}
  \tau
  \qquad
  \text{or}
  \qquad
  \sigma
  \xrightarrow{\partial_k}
  \mu
  \xrightarrow{d_{k-1}}
  \tau .
\]
There are at most \(N(N+1)\) candidates of each type, and duplicate
destinations are combined with their incidence signs. Thus
\(\widetilde\Delta_k\) has sparsity and absolute entry sum per basis state at
most
\[
  r(N)=2N(N+1)+1.
\]
On an invalid string, the only nonzero entry is the diagonal penalty. Clique
validity, path enumeration, aggregation of duplicate destinations, and exact
entry evaluation use polynomially many queries to the adjacency description
of \(G\).

Choose the power of two
\[
  S=2^{\lceil\log_2 r(N)\rceil}.
\]
Then
\[
  H_{G,k}=S^{-1}\widetilde\Delta_k
\]
is an exactly specified sparse Hamiltonian with \(\|H_{G,k}\|\le1\) and entries
in \(\mathbb Q\subset\mathbb Q(i)\). In the YES case its smallest singular value is exactly
zero. In the NO case, positivity of the Hodge Laplacian and
\(\pgap(N)\le1\) give
\[
  \sigma_1(H_{G,k})
  =\lambda_{\min}(H_{G,k})
  \ge \frac{\pgap(N)}{S}
  =\frac{1}{\poly(N)}.
\]

Rudolph calls this the Exact Sparse Hamiltonian problem
\cite[Prob.~2.9]{rudolph2025universala}. For entries in \(\mathbb Q(i)\), his
containment theorem places it in
\(\QMAone^{\Gtwo}\)~\cite[Lem.~6.1]{rudolph2025universala}. The exact LCU
decomposition into implementable one-sparse unitaries is supplied by
\cite[Lem.~6.2]{rudolph2025universala}. Applying that verifier to \(H_{G,k}\)
gives perfect completeness for a harmonic witness and inverse-polynomial
soundness in the promised NO case. 
\end{proof}

\section{Discussion}
\label{sec:discussion}

The theorem separates two roles that are intertwined in the weighted
construction. Vertex weights provide a convenient metric for the gap analysis,
but they are not the source of the computational hardness. The source is the
homological encoding of a local Hamiltonian together with an inverse-polynomial
spectral promise. The blow-up construction makes this separation explicit. It
recovers the weighted Hodge metric on the symmetric sector, and the decoupling
estimate removes the copy-dependent directions from the low-energy space.

This viewpoint raises the question of whether similar unweighting mechanisms
apply to other weighted or filtered homological complexity results. The present
paper only treats the weighted clique-homology reduction needed for the
\(\QMAone\) hardness result
\cite{king2024gapped,rudolph2025universala}. Related settings include
weighted-complex \(\mathsf{QMA}\)-hardness for Hodge Laplacians
\cite{rayudu2025fermionic}, \(\mathsf{BQP}_1\)-hard harmonic persistence
\cite{gyurik2024quantum}, \(\mathsf{MA}\)-complete orientable-filtration
homology~\cite{hayakawa2025computationalMA}, and \(\mathsf{DQC1}\)-hard
normalized persistence \cite{lowe2026complexity}. Understanding which weights
or filtration values can be replaced by unweighted multiplicity would help
clarify the role of artificial weighting in worst-case quantum
advantages for topological data analysis.

\paragraph{Sparsifying the blow-up.}
The present reduction is intentionally dense: each base vertex is replaced by a
complete graph, and each base edge by a complete bipartite connection. A natural
next question is whether the same multiplicity-as-weight mechanism can be
implemented by sparser replacement graphs, for example by expanders or other
pseudorandom block gadgets. Such a sparsification would be important for eventual
homological QPCP-type questions.

The main obstruction in this direction is topological rather than merely spectral. In the dense
blow-up, complete multipartite structure and block-relabeling symmetry make
the intended sector rigid, while \Cref{thm:asymmetric-decoupling} pushes its
orthogonal complement away from zero. If the block connections are sparsified, this symmetry
is lost or weakened, and new unintended homologous cycles may appear. These extra holes
could overwhelm the intended encoded homology even if the graph remains a good
spectral approximator in the usual graph-Laplacian sense. Thus sparsifying the
construction is not simply a matter of preserving spectrums but it must also
control spurious homology.

One possible way to formulate a more tractable problem is to mark the intended
cycles and study a persistence-style promise. Instead of asking for the full
homology of the sparse complex to match the dense blow-up, one could ask how long the specified encoded cycles persist across a controlled
filtration.

\section*{Acknowledgments}
RH thanks Kazuki Sakamoto for helpful discussion.
RH was supported by JST PRESTO Grant Number JPMJPR23F9 and JST ASPIRE Grant Number JPMJAP26A4, Japan.

\section*{AI Disclosure}
The author used generative AI tools, including Anthropic Claude Opus 4.7 and OpenAI GPT-5.5/GPT-5.6/Codex, for brainstorming, outlining, copy-editing,
LaTeX assistance, and notation checks. The author verified all mathematical content, citations, and final text, and takes full responsibility for the manuscript.

\appendix
\section{Proof sketch for the many-gadget estimate}
\label{app:many-gadget-proof-sketch}

For completeness, this appendix gives a proof sketch of the imported
many-gadget estimate from \cite[Thm.~10.1 and App.~B]{king2024gapped} in the
notation used in \Cref{sec:energy-analysis}.

\begin{proof}[Proof sketch of \Cref{thm:kk-many-gadget}]
In the notation of \cite{king2024gapped}, the source promise \(g\) is our
\(\pH\), the number of Hamiltonian terms is \(t\), and the local width \(m\)
is bounded by our fixed constant \(\mloc\).
Their theorem chooses \(\lambda=c\,t^{-1}g\) for a sufficiently small constant
\(c\) and obtains the scale
\(E=c\lambda^{4m+2}t^{-1}g\). We choose such a constant of the form
\(c_{\mathrm{wt}}=C_{\mathrm{wt}}^{-1}\), and absorb its fixed prefactor into
\(\alpha_{\mathrm{wt}}\). Since \(m\le\mloc\) and \(0<\lambda<1\), this bound
may be weakened to the displayed common exponent \(4\mloc+2\). We state the
three imported estimates from their Appendix~B below.

The recalled argument starts from the single-gadget spectral decomposition in
\Cref{lem:single-gadget-spectrum}. Let \(\Pi_0\) project onto the qubit-register
chain space, let \(\Pi_i\) project onto the chains containing interior vertices
of gadget \(i\), and let \(\Pi_{\mathcal H}\le\Pi_0\) project onto the harmonic
register \(\mathcal H_n\). All these projectors are in degree \(2n-1\).
Interior disjointness gives
\[
  \Pi_0+\sum_i\Pi_i=I
\]
on the combined chain space. For each individual padded gadget, the four
spectral projectors from \Cref{lem:single-gadget-spectrum} decompose the
two-piece chain space \(\im(\Pi_0+\Pi_i)\). Thus
\[
  \Pi_0+\Pi_i
  =
  \Pi_i^{(\mathcal A)}+\Pi_i^{(\mathcal B)}
  +\widehat\Phi_i+\widehat\Phi_i^\perp.
\]
Here \(\Pi_i^{(\mathcal A)}\) and \(\Pi_i^{(\mathcal B)}\) project onto the
remaining high-energy band and the bulk band, while \(\widehat\Phi_i\) and
\(\widehat\Phi_i^\perp\) project onto the penalty band and kernel. This is the
flat decomposition used over all \(t\) gadgets.

Let \(\varphi\) be normalized and suppose, toward a contradiction, that
\[
  \langle\varphi,\Delta_{\GammaH,2n-1}\varphi\rangle<E.
\]
For comparison with the source proof, write
\(\varphi=\omega_0+\sum_i\omega_i\), where
\(\omega_0=\Pi_0\varphi\), \(\omega_i=\Pi_i\varphi\), and set
\(\varphi_i=(\Pi_0+\Pi_i)\varphi=\omega_0+\omega_i\).
The three estimates
\cite[Lems.~10.2--10.4]{king2024gapped} give
\begin{align}
  \left\langle\varphi,
    \sum_i\widehat\Phi_i\varphi\right\rangle
  &=
  O\!\left(\lambda^{-(4\mloc+2)}Et\right),
  \label{eq:kk-penalty-overlap}\\
  \left\langle\varphi,
    \sum_i\Pi_i^{(\mathcal A)}\varphi\right\rangle
  &=O(\lambda^2t),
  \label{eq:kk-high-overlap}\\
  \left\langle\varphi,
    \sum_i\Pi_i^{(\mathcal B)}\varphi\right\rangle
  &=O(\lambda^2t).
  \label{eq:kk-bulk-overlap}
\end{align}
The second estimate applies when \(E\) is below a fixed multiple of
\(\lambda^2\), and the third when it is below a fixed multiple of
\(\lambda^4\); the value of \(E\) chosen below satisfies both conditions.

\Cref{eq:kk-penalty-overlap} uses only the up-Laplacian:
the gadget up-Laplacians add without interference and dominate the penalty
projectors at scale \(\lambda^{4m_i+2}\). For
\Cref{eq:kk-high-overlap}, restricting to \(\varphi_i\) leaves a possible
down-boundary contribution on the shared register; the weighted gadget
boundaries make this leakage \(O(\lambda)\), so the local energy is
\(O(\lambda^2)\). For \Cref{eq:kk-bulk-overlap}, the bulk localization claim
in the same proof places the relevant boundary or coboundary inside the
pairwise-disjoint bulk regions. Consequently those contributions cannot cancel
between gadgets. These are exactly the two places where the shared
down-Laplacian needs more than up-Laplacian additivity.

It remains to insert these estimates into the flat decomposition. As operators
on the full chain space,
\[
  \Phi_i^\perp=\Pi_{\mathcal H}-\Phi_i,
  \qquad
  \sum_i\Phi_i=H\quad\text{on }\mathcal H_n.
\]
The projector perturbation estimate in
\Cref{lem:single-gadget-spectrum} therefore gives
\begin{align*}
  \sum_i\widehat\Phi_i^\perp-(t-1)\Pi_0
  &=
  \Pi_{\mathcal H}-H
  -(t-1)(\Pi_0-\Pi_{\mathcal H})
  +O(\lambda t).
\end{align*}
Here the final \(O(\lambda t)\) denotes an operator whose norm has that order;
all implicit constants depend only on the fixed local gadget family.
Using the two projector identities above and then
\(H\succeq\pH\Pi_{\mathcal H}\), we obtain
\begin{align*}
  1
  &=
  \left\langle\varphi,
    \left[
      \sum_i\bigl(
        \widehat\Phi_i^\perp+\widehat\Phi_i
        +\Pi_i^{(\mathcal A)}+\Pi_i^{(\mathcal B)}
      \bigr)
      -(t-1)\Pi_0
    \right]\varphi
  \right\rangle\\
  &\le
  (1-\pH)\|\Pi_{\mathcal H}\varphi\|^2
  -(t-1)
    \langle\varphi,(\Pi_0-\Pi_{\mathcal H})\varphi\rangle\\
  &\qquad
  +O(\lambda t)
  +O\!\left(\lambda^{-(4\mloc+2)}Et\right)
  +O(\lambda^2t).
\end{align*}
Set
\[
  \lambda=c_{\mathrm{wt}}\pH/t,
  \qquad
  E=\alpha_{\mathrm{wt}}
    \lambda^{4\mloc+2}t^{-1}\pH.
\]
The three error terms are respectively
\(O(c_{\mathrm{wt}}\pH)\), \(O(\alpha_{\mathrm{wt}}\pH)\), and
\(O(c_{\mathrm{wt}}^2\pH)\), where the last estimate uses
\(\pH\le1\) and \(t\ge1\). The second term on the right-hand side is
nonpositive. Choosing first \(c_{\mathrm{wt}}\), and then
\(\alpha_{\mathrm{wt}}\), sufficiently small makes the total error strictly
less than \(\pH\). Since
\(\|\Pi_{\mathcal H}\varphi\|\le1\), the displayed inequality would then give
\(1<1\), a contradiction. Hence
\[
  \Delta_{\GammaH,2n-1}
  \succeq
  \alpha_{\mathrm{wt}}\lambda^{4\mloc+2}t^{-1}\pH\,I .
\]
\end{proof}

\bibliographystyle{alpha}
\bibliography{references}

\end{document}